\pgfplotsset{compat=1.17}
\algnewcommand\algorithmicnot{\textbf{not}}
\theoremstyle{definition}
\newtheorem{example}{Example}
\newtheorem{definition}{Definition}
\theoremstyle{plain}
\newtheorem{proposition}{Proposition}
\newtheorem{corollary}{Corollary}
\newcommand{\Sol}{\mathcal{S}}
\newcommand{\M}{\mathcal{M}}
\newcommand{\N}{\mathcal{N}}
\newcommand{\Y}{\mathcal{Y}}
\newcommand{\A}{\mathcal{A}}
\newcommand{\T}{\mathcal{T}}
\newcommand{\x}{\bm{x}}
\newcommand{\y}{\bm{y}}
\newcommand{\no}{\color{Red}{\ding{55}}}
\newcommand{\yes}{\color{Green}{\checkmark}}
\newcommand{\CC}{C\nolinebreak\hspace{-.05em}\raisebox{.4ex}{\tiny\bf +}\nolinebreak\hspace{-.10em}\raisebox{.4ex}{\tiny\bf +}}
\def\CC{{C\nolinebreak[4]\hspace{-.05em}\raisebox{.4ex}{\tiny\bf ++}}}
\title{Fair integer programming under dichotomous and cardinal preferences}
\author[a,b,*]{Tom Demeulemeester}
\author[b,c]{Dries Goossens}
\author[a,d]{Ben Hermans}
\author[a]{Roel Leus}
\affil[a]{Research Center for Operations Research \& Statistics, KU Leuven, Belgium}
\affil[b]{Department of Business Informatics and Operations Management, Ghent University, Belgium}
\affil[c]{Corelab CVAMO, FlandersMake@UGent, Belgium}
\affil[d]{ORTEC, Belgium}
\affil[*]{Corresponding author: \tt{tom.demeulemeester@kuleuven.be}}
\date{}
\begin{document}
\maketitle

%% Title, authors and addresses

%% use the tnoteref command within \title for footnotes;
%% use the tnotetext command for theassociated footnote;
%% use the fnref command within \author or \affiliation for footnotes;
%% use the fntext command for theassociated footnote;
%% use the corref command within \author for corresponding author footnotes;
%% use the cortext command for theassociated footnote;
%% use the ead command for the email address,
%% and the form \ead[url] for the home page:
%% \title{Title\tnoteref{label1}}
%% \tnotetext[label1]{}
%% \author{Name\corref{cor1}\fnref{label2}}
%% \ead{email address}
%% \ead[url]{home page}
%% \fntext[label2]{}
%% \cortext[cor1]{}
%% \affiliation{organization={},
%%            addressline={}, 
%%            city={},
%%            postcode={}, 
%%            state={},
%%            country={}}
%% \fntext[label3]{}

%\author{Tom Demeulemeester}

%\affiliation{organization={},%Department and Organization
%            addressline={}, 
%            city={},
%            postcode={}, 
%            state={},
%            country={}}
\vspace*{-1cm}
\begin{abstract}
%% Text of abstract
\noindent One cannot make truly fair decisions using integer linear programs unless one controls the selection probabilities of the (possibly many) optimal solutions. For this purpose, we propose a unified framework when binary decision variables represent agents with \emph{dichotomous} preferences, who only care about whether they are selected in the final solution. We develop several general-purpose algorithms to fairly select optimal solutions, for example, by maximizing the Nash product or the minimum selection probability, or by using a random ordering of the agents as a selection criterion (Random Serial Dictatorship). We also discuss in detail how to extend the proposed methods when agents have \emph{cardinal} preferences. As such, we embed the \enquote{black-box} procedure of solving an integer linear program into a framework that is explainable from start to finish. Lastly, we evaluate the proposed methods on two specific applications, namely kidney exchange (dichotomous preferences), and the scheduling problem of minimizing total tardiness on a single machine (cardinal preferences). We find that while the methods maximizing the Nash product or the minimum selection probability outperform the other methods on the evaluated welfare criteria, methods such as Random Serial Dictatorship perform reasonably well in computation times that are similar to those of finding a single optimal solution. 
\end{abstract}

%%Graphical abstract
%\begin{graphicalabstract}
%\includegraphics{grabs}
%\end{graphicalabstract}

%%Research highlights
%\begin{highlights}
%\item Research highlight 1
%\item Research highlight 2
%\end{highlights}

%% \linenumbers

%% main text
\section{Introduction}
\label{sec:intro}
When solving an integer linear program (ILP), solvers traditionally return one of the possibly many optimal solutions in a deterministic way \citep[e.g.,][]{cplex2021deterministic, gurobi2023deterministic}. This might not be desirable in practical applications where the implications of the selected solution are of great importance to the agents involved. Consider, for example, the following zero-one knapsack instance.
\begin{align*}
    &\max \quad \quad 2x_1 + x_2 + x_3 + x_4\\
    &\text{  s.t.\ } \quad \quad \: 2x_1 + x_2 + x_3 + x_4 \leq 3\\
    &\quad \quad \quad \quad \; x_1, x_2, x_3, x_4 \in \{0,1\}
\end{align*}
This simple instance could represent a wide range of practical problems. Imagine, for example, that a school has three remaining spots while five students want to enroll at that school. Decision variable~$x_1$ represents two of those students who are twins, and only want to be selected together, while the other decision variables represent individual students.

There are four optimal solutions for this instance: $\{x_1, x_2\}$, $\{x_1, x_3\}$, $\{x_1, x_4\}$, and $\{x_2, x_3, x_4\}$. One possible way to fairly choose between these solutions is to select solution~$\{x_2, x_3, x_4\}$ with a probability of~40\%, and each of the other solutions with a probability of~20\%. In this way, each student is selected with an equal probability of~60\%. Nevertheless, commercial solvers, such as Gurobi and CPLEX, will always return solution $\{x_2, x_3, x_4\}$ for their default parameter settings, regardless of the order in which the variables are input into the solver.\footnote{We tested this for the default settings of Gurobi 9.1.1 and CPLEX 12.9, both implemented in \CC.} This means that Gurobi and CPLEX will never select the twins in our example, for no clear reason. Using a solver without being aware of this issue may therefore result in an unfair treatment of some of the agents involved. Moreover, this small example shows that one cannot claim to make a fair decision using an ILP unless one controls the selection procedure of the optimal solutions.

To overcome this undesirable behaviour, we will discuss methods to control the selection probabilities of the optimal solutions of ILPs with multiple optimal solutions, in order to improve both fairness and transparency in decision-making processes that use ILPs. Since it is well-known that enumerating all optimal solutions of an ILP formulation is computationally challenging in general, we will pay special attention to methods that do not require a full enumeration. We will refer to the problem of controlling the selection probabilities for the optimal solutions of ILPs as \emph{fair integer programming}. In Sections \ref{sec:partitioning}-\ref{sec:axioms}, we study the fair integer programming problem for ILPs with binary decision variables, each representing an agent with \emph{dichotomous} preferences. Dichotomous preferences can be used to model simple settings where a yes/no decision should be made for each agent to decide whether they are selected in the final solution, or to model more complex settings where agents only care whether a certain criterion is satisfied by the final solution, e.g., an agent is happy if and only if \enquote{their} set is covered by the final solution, or if and only if the agent's submodular utility function reaches a certain treshold. The general class of ILPs that we study can be used to model various network problems, knapsack, facility location, scheduling, matching, kidney exchange, etc. In Section~\ref{sec:extensions}, we will discuss in detail how the proposed methods can be extended to settings in which the agents have \emph{cardinal} preferences, i.e., a utility value for each of the possible outcomes.
    
Important to note is that our analysis takes place after the decision-maker has implicitly described a set of optimal solutions that are all equally desirable in her opinion. In other words, we assume the set of optimal solutions to satisfy all inequity and group fairness criteria that the decision-maker deems relevant for a specific application \citep[e.g.,][]{karsu2015inequity}. The input to our problem is then a formulation that describes this set of optimal solutions, but it is irrelevant for our methods which (linear) constraints it contains or whether or not it is the last step of a hierarchical optimization process. One should note that symmetry breaking and dominance rules, which are typically used to lower computation times of ILPs by reducing the number of optimal solutions, should be adopted with care. When using a dominance rule, for example, which exploits the fact that a non-empty subset of the optimal solutions satisfies a certain property, the resulting formulation is no longer guaranteed to describe all optimal solutions to the original instance.\footnote{These dominance rules are referred to as existential- and sufficient-property based dominance rules by \cite{jouglet2011dominance}. Adopting dominance rules that describe a necessary condition for optimality (universal-property based dominance rules), however, will not harm the outcome of the fair integer programming problem.} Because we optimize various fairness criteria over the convex hull of the optimal solutions, this might result in selection probabilities over the optimal solutions that are sub-optimal with respect to the chosen fairness criterion. %While the problem of controlling the selection probabilities of the optimal solutions is also relevant for the more general class of combinatorial optimization problems, rather than ILPs, a separate study is required to extend our proposed methods for controlling the selection probabilities of the optimal solutions to problems that are typically solved by algorithms that are not capable of finding all optimal solutions. 

Our main contributions are the following. First, to the best of our knowledge, we propose the first framework that is not application-specific to tackle fair integer programming under dichotomous preferences. In contrast to the recent results by \cite{flanigan2021fair} and \cite{st2022adaptation}, we propose general-purpose algorithms for a highly general class of ILPs to construct selection probabilities over the optimal solutions that satisfy various fairness criteria (Sections~\ref{sec:partitioning} and~\ref{sec:distributions}).

Second, the generality of our framework allows us to embed the fair integer programming problem into the rich literature on probabilistic social choice and cooperative bargaining. The connection with probabilistic social choice enables us to study axiomatic properties of the proposed methods (Section~\ref{sec:axioms}). Additionally, we exploit the connection with cooperative bargaining to extend our framework to agents with cardinal preferences, thus initiating the study of this problem setting in the literature (Section~\ref{sec:extensions}).

Third, we evaluate the proposed methods for two specific applications, one with dichotomous and one with cardinal preferences (Section~\ref{sec:comput}). We find that the performance of the proposed methods on the evaluated welfare criteria is rather application-specific. Interestingly, our findings show that the Random Serial Dictatorship rule that we introduce for fair integer programming, which uses a random ordering of the agents as a selection criterion, is an intuitive mechanism that performs reasonably well on the evaluated welfare criteria in solution times that are similar to those of finding a single optimal solution.

%The remainder of this paper is structured as follows. Section~\ref{sec:related} provides an overview of the related literature. Section~\ref{sec:preliminaries} formally introduces the model when agents have dichotomous preferences, and Section~\ref{sec:distributions} proposes solution methods for constructing various distributions over the optimal solutions. The axiomatic properties of the resulting distributions are discussed in Section~\ref{sec:axioms}. Section~\ref{sec:extensions} discusses how the solution methods from Section~\ref{sec:distributions} can be extended to include near-optimal solutions, or to problems where agents have cardinal rather than dichotomous preferences. The computational performance of the proposed methods is evaluated in Section~\ref{sec:comput}. Lastly, Section~\ref{sec:conclusion} concludes.

\section{Definitions}
\label{sec:preliminaries}
We start by defining the general class of integer linear programs for which we will study the selection procedure of an optimal solution when the agents have dichotomous preferences. Define a set $\mathcal{A}$ of $n\in \mathbb{N}$ agents with corresponding binary decision variables~$\bm{x}\in \{0,1\}^n$, a matrix~$\bm{A}\in \mathbb{R}^{m \times n}$, and vectors~$\bm{v}\in \mathbb{R}^n$ and $\bm{c}\in\mathbb{R}^m$, with $m\in\mathbb{N}$. Additionally, consider a vector of $k\in \mathbb{N}$ auxiliary integer decision variables $\bm{y} \in \mathbb{Z}^k$, with corresponding parameters~$\bm{w}\in \mathbb{R}^k$ and $\bm{B} \in \mathbb{R}^{m\times k}$.\footnote{Most results in our paper continue to hold when the $\y$-variables are continuous, rather than integer. While the set of optimal solutions might be infinite in both cases, causing the uniform distribution in Section~\ref{subsec:uniform}, for example, to be ill-defined, the projection of the set of optimal solutions onto the $\x$-variables will still be finite.} Consider the following integer linear program.
\begin{align*}
		\max \;\;& \bm{v}^\intercal \bm{x} + \bm{w}^\intercal \bm{y}\\
		\text{s.t.} \;\;& \bm{Ax} + \bm{By} \leq \bm{c}\\
		& \bm{x} \in \{0,1\}^n, \y\in\mathbb{Z}^k%\\
		%& \bm{y} \in \mathbb{N}^k
\end{align*} 
Denote the set of all ILPs of the above form by~$\Xi$. For each instance~$\xi \in \Xi$, a binary decision has to be made for each of the agents in~$\mathcal{A}$, and we say that an agent~$i\in \mathcal{A}$ is \emph{selected} in a given solution~$\bm{x}$ if $x_i = 1$.

Let~$\mathcal{S}(\xi) = \{(\bm{x^1},\bm{y^1}), (\bm{x^2},\bm{y^2}),\ldots, (\bm{x^{|\Sol(\xi)|}},\bm{y^{|\Sol(\xi)|}})\}$ be the set of all optimal solutions of an ILP~$\xi \in \Xi$. Moreover, denote the objective value of the solutions in~$\mathcal{S}(\xi)$ by~$z^*(\xi)$. We are mainly interested in the projection of $\Sol(\xi)$ onto the $\x$-variables, which we denote by $\Sol_{\x}(\xi)$. We will simply refer to~$\mathcal{S}(\xi)$ by~$\mathcal{S}$ when the ILP~$\xi \in \Xi$ is clear from the context, and the same holds for~$z^*$, $\Sol_{\x}$, and other related notations that will be introduced further on. Moreover, we will denote the convex hull of $\Sol_{\x}$ by $\text{Conv}(\Sol_{\x})$.

Based on~$\mathcal{S}$, we can partition the set of agents~$\mathcal{A}$ in the following disjoint subsets:
\begin{enumerate}[label=(\roman*)]
    \itemsep 0em
    \item $\mathcal{Y} = \{i \in \mathcal{A}: \forall s \in \mathcal{S}: x^s_i = 1\}$;
    \item $\mathcal{N} = \{i \in \mathcal{A}: \forall s \in \mathcal{S}:x^s_i = 0\}$;
    \item $\mathcal{M} = \{i \in \mathcal{A}: \exists s,t \in \mathcal{S}: x^s_i \neq x^t_i\}$.
\end{enumerate}
The set~$\mathcal{Y}$, resp.\ $\mathcal{N}$, consists of the agents that are always, resp.\ never, selected, while the set~$\mathcal{M}$ contains the agents that are selected in some, but not in all of the optimal solutions in~$\mathcal{S}$. Unless there exists a solution that selects all agents in~$\M$, any deterministic selection of one of the solutions~$(\bm{x^s},\bm{y^s}) \in \mathcal{S}$ will clearly disadvantage at least one agent~$i\in\mathcal{M}$ for which~$x^s_i =0$. A fair treatment of the agents in~$\mathcal{M}$ therefore requires randomization. Given a set of optimal solutions~$\mathcal{S}$, a \emph{lottery}~$\bm{\lambda}= (\lambda_s)_{s=1}^{|\mathcal{S}|}$ is a probability distribution over~$\mathcal{S}$, with $\sum_{s=1}^{|\mathcal{S}|} \lambda_s = 1$ and $\lambda_s \geq 0$ for all~$s\in\{1,\ldots, |\mathcal{S}|\}$. Denote the set of all lotteries for a set of optimal solutions~$\mathcal{S}$ by~$\Delta(\mathcal{S})$. 

In general, decision-makers mostly care about the selection probabilities of the agents for ILPs in~$\Xi$, rather than about the selection probabilities of the optimal solutions. A \emph{distribution} is a vector $\bm{d}\in[0,1]^n$, corresponding to the selection probabilities of the agents in~$\mathcal{A}$ for an ILP in~$\Xi$. We assume probability~$d_i$ to be the canonical utility of agent~$i\in\mathcal{A}$ \citep[similarly to, e.g.,][]{aziz2019fair}. Clearly, not all such vectors can be obtained through lotteries over the optimal solutions, as illustrated in Example~\ref{ex:intro} below. The following definition formalizes this idea.

\begin{definition}
\label{def:implementable} Given an ILP~$\xi \in \Xi$, a distribution~$\bm{d} \in [0,1]^n$ is \emph{realizable} over the corresponding set of optimal solutions $\mathcal{S} = \{(\bm{x^1},\bm{y^1}),\ldots,(\bm{x^{|\mathcal{S}|}},\bm{y^{|\mathcal{S}|}})\}$ if there exists a lottery~$\bm{\lambda}\in\Delta(\mathcal{S})$ such that
\begin{equation}
    \label{eq:describes}
    \bm{d} = \sum_{s=1}^{|\mathcal{S}|} \lambda_s\bm{x^s}.  
\end{equation}
\end{definition}
A lottery~$\bm{\lambda} \in \Delta(\mathcal{S})$ that satisfies Equation~(\ref{eq:describes}) is said to \emph{realize}~$\bm{d}$, and we denote the set of all lotteries that realize a distribution~$\bm{d}$ by $\Delta_{\bm{d}}(\mathcal{S}) \subseteq \Delta(\mathcal{S})$. Note that Definition~\ref{def:implementable} is equivalent to saying that, given an ILP in~$\Xi$, a distribution~$\bm{d}$ is realizable over a set of optimal solutions~$\mathcal{S}$ if it lies in the convex hull of the~$\bm{x}$-variables of the solutions in~$\mathcal{S}$.

Lastly, a \emph{distribution rule} is a function~$f:\Xi\to[0,1]^n$ that maps each integer linear program $\xi \in\Xi$ to a distribution~$f(\xi) \in [0,1]^n$. Because of the one-to-one mapping between an ILP~$\xi$ and its set of optimal solutions~$\mathcal{S}(\xi)$, we will use $f(\xi)$ and $f(\Sol)$ interchangeably in the remainder of this paper. 

The following example illustrates the introduced terminology.

\begin{example}
\label{ex:intro}
Consider the following zero-one knapsack instance with four agents, and a capacity of 6.
\begin{align*}
    &\max \quad \quad 4x_1 + 3x_2 + x_3 + x_4\\
    &\text{  s.t.\ } \quad \quad \: 4x_1 + 2.5x_2 + 2.5x_3 + 2.5x_4 \leq 6\\
    &\quad \quad \quad \quad \; x_1, x_2, x_3, x_4 \in \{0,1\}
\end{align*}
The optimal objective value for this instance equals~$z^*=4$, and there are three optimal solutions, namely $\mathcal{S}=\{(1,0,0,0), (0,1,1,0), (0,1,0,1)\}$. Because all agents appear in some, but not in all of the optimal solutions in~$\mathcal{S}$, all agents belong to~$\mathcal{M}$, and~$\mathcal{Y} = \mathcal{N} = \varnothing$. A possible lottery~$\bm{\lambda^U}$ is to select each of the optimal solutions in~$\mathcal{S}$ with an equal probability, i.e., $\bm{\lambda^U} = \{\frac{1}{3},\frac{1}{3},\frac{1}{3}\}$. The corresponding distribution~$\bm{d^U}$ is equal to~$(\frac{1}{3},\frac{2}{3},\frac{1}{3},\frac{1}{3})$, which means that agent~2 is selected with a probability of~$\frac{2}{3}$ by lottery~$\bm{\lambda^U}$, while all other agents are selected with a probability of~$\frac{1}{3}$. Now consider the distribution~$\bm{d^E} = (\eta, \eta, \eta, \eta)$, which select all agents in~$\mathcal{M}$ with an equal probability~$\eta$. Then~$\bm{d^E}$ is not realizable in this instance for any value~$\eta\in[0,1]$, because no lottery~$\bm{\lambda} = (\lambda_1, \lambda_2, \lambda_3)$ satisfies $\lambda_1 + \lambda_2 + \lambda_3 = 1$, while also satisfying $\lambda_1 = \lambda_2 = \lambda_3 = \eta$, and~$\lambda_2+\lambda_3=\eta$.

\end{example}

\section{Related work}
\label{sec:related}
Although the body of literature that deals with fairness and transparency in algorithmic decision-making is vast and rapidly expanding, very few papers explicitly discuss the problem of how to select one of the optimal solutions of some general ILP in a fair and transparent way. Nevertheless, because of the generality of the problem at hand, various fields of research cover topics that are closely related to it, and in the remainder of this section we aim to provide a concise overview of the relevant literature.

\subsection{Fair integer programming for specific problems}
We will first discuss two specific problem settings, which are both special cases of the general setting studied in this paper, in which the fair integer programming problem has been studied. First, \citet{flanigan2021fair} and \cite{flanigan2021transparent} study the selection probabilities of optimal solutions for \emph{sortition}, which is the problem of randomly selecting a panel of representatives from the population to decide on policy questions. The constraints in their model are simply quota stating lower and upper bounds for various subsets of the population (e.g., female, older than 65). While \cite{flanigan2021fair} study the distribution rules that we discuss in Sections~\ref{subsec:uniform}-\ref{subsec:custom_selection_criteria}, and propose a column generation framework for them, \cite{flanigan2021transparent} study how to implement these distribution rules as a uniform lottery over a set of~$m$ panels. 

A second specific problem setting for which the selection probabilities of optimal solutions have been studied in the literature is \emph{kidney exchange}. In kidney exchange, patients who suffer from kidney failure and who have an incompatible kidney donor, are matched to the incompatible donor of another patient such that the matched donors' kidneys can be transplanted. While various formulations to model the kidney exchange problem as an ILP exist  \citep[e.g.,][]{abraham2007clearing,roth2007efficient,dickerson2016position}, the objective typically consists of maximizing the number of transplants. As pointed out by \cite{farnadi2021individual} and \cite{carvalho2023theoretical}, however, there may be many solutions maximizing the number of transplants. \cite{roth2005pairwise} introduce an egalitarian mechanism, which equalizes the individual probabilities of receiving a transplant as much as possible, and which outputs a lottery over the maximum-size matchings for pairwise exchanges (no exchange cycles of size three or larger). \cite{li2014egalitarian} show that Roth et al.'s egalitarian solution can be computed efficiently. Alternatively, \cite{farnadi2021individual} propose and evaluate three different methods to enumerate all maximum-size matchings for kidney exchange problems with longer exchange cycles, and then discuss how to 
optimize two families of probability distributions over the optimal solutions. Moreover, \cite{st2022adaptation} propose a column generation procedure for the rules discussed in Section \ref{subsec:custom_selection_criteria}, and for the maximin rule (which is only the first step of the leximin rule discussed in Section~\ref{subsec:leximin}). The column generation procedures that we propose in Section~\ref{sec:distributions} allow the decision-maker to have more control over which of the solutions can be included in the resulting lottery. \cite{st2022adaptation} balance maximizing some fairness measure and the quality of the solutions in the lottery by optimizing a weighted product of both, possibly resulting in undesirable solutions in the support of the resulting lotteries. By including all solutions that are at most a fraction~$\epsilon<1$ worse than the optimal solution (Section~\ref{sec:near_opt}), our framework can be used to find a distribution minimizing individual fairness without the empty solution in the support, thus solving one of their open questions.

Further, a recent stream of papers study, for settings where decisions have to be made repeatedly, how to improve fairness over time \citep[e.g.,][]{bampis2018fair,lackner2020perpetual,lodi2022fairness,elkind2022fairness}.

\subsection{Cooperative bargaining and probabilistic social choice}
There are two more general problem settings, each with their own terminology and solution concepts, in which our problem can be embedded. First, in \emph{cooperative bargaining}, a set of two or more participants is faced with a \emph{feasible region} in the utility space, which is generally assumed to be non-empty, convex, closed, and bounded. If the participants can reach a unanimous agreement on a point in this feasible region, then each of the participants receives the corresponding utility. If unanimity cannot be reached, a given \textit{disagreement outcome} is the result. We refer the reader to \cite{Roth1979axiomatic}, \cite{thomson1994cooperative}, and \cite{peters2013axiomatic} for a detailed overview of results. In our case, the feasible set corresponds to the convex hull of the optimal solutions of the ILP under consideration, while the disagreement outcome is the origin for all agents in~$\M$, who are selected in some, but not in all of the optimal solutions. In Section~\ref{subsubsec:bargaining}, we elaborate on the link between our setting and cooperative bargaining when agents have cardinal, rather than dichotomous, preferences.

Second, in \emph{probabilistic social choice}, all agents report their (ordinal) preferences over a set of outcomes. The goal is then to select a lottery over the set of outcomes in order to satisfy certain desirable criteria. Well-studied probabilistic social choice functions are \emph{Random (Serial) Dictatorship} \citep{gibbard1977manipulation}, and \emph{maximal lotteries} \citep{fishburn1984probabilistic,brandl2016consistent}. Our problem can be stated in the terminology of probabilistic social choice theory by letting each of the optimal solutions correspond to one of the outcomes. 

Following the corresponding literature in probabilistic social choice under dichotomous preferences \citep[e.g.,][]{bogomolnaia2004random,bogomolnaia2005collective,aziz2019fair}, we will make the assumption that an agent's canonical utility for a lottery over the optimal solutions of an ILP is simply the expected value of the binary variable associated to them, namely the probability with which she is selected by the lottery. In the context of cooperative bargaining, so-called \textit{binary lottery games} have been experimentally studied, for example, by \cite{roth1979game}, \cite{roth1981sociological}, \cite{roth1982role}, and \cite{murnighan1988risk}.

The main difference between fair integer programming and the literature on cooperative bargaining and probabilistic social choice is the way in which the feasible region or the set of possible outcomes is expressed. An underlying assumption in cooperative bargaining and probabilistic social choice is that the set of possible outcomes is given \emph{explicitly}, by describing the non-empty, convex, closed, and bounded feasible region (cooperative bargaining), or by listing all possible outcomes (probabilistic social choice). In our setting, however, the set of possible outcomes is described \emph{implicitly} as the set of optimal solutions to an integer programming formulation. This implies that, in general, it is $\mathcal{NP}$-hard, and thus computationally challenging, to already obtain one of the optimal solutions \citep[e.g.,][]{karp1972reducibility}. Moreover, returning the set of all optimal solutions to an ILP belongs to complexity class $\#\mathcal{P}$, which is the analogue to $\mathcal{NP}$, but defined for counting problems instead of for decision problems \citep{valiant1979complexity_permanent,valiant1979complexity_enumeration,danna2007generating}. As a result, we put strong emphasis on methods that obtain a maximally fair lottery over the optimal solutions of an integer linear program, for various fairness metrics, without having to generate the set of all optimal solutions.

\subsection{Other related work}
We conclude this section by giving a concise overview of other streams of literature related to our setting. First, \cite{danna2007generating} and \cite{serra2020compact} illustrate that many ILPs with binary decision variables have multiple, and possibly many, optimal solutions for MIPLIB instances, and \cite{farnadi2021individual} and \cite{carvalho2023theoretical} illustrate this for kidney exchange instances. Moreover, \cite{serra2020compact} discuss how to represent (near)-optimal solutions in weighted decision diagrams, which can be easily queried. An alternative to generating all optimal solutions is to sample one of the optimal solutions. One possible solution method for the more general problem of random sampling in convex bodies are \textit{(geometric) random walks}, and we refer to reader to \cite{vempala2005geometric} for an overview.

With respect to the fairness of the returned solution, \cite{chen2022combining} provide a recent overview of the related problem of selecting a utility vector from a set of feasible utility vectors in order to maximize a given social welfare function, without allowing for randomization. Moreover, \cite{bertsimas2011price} introduce the \emph{price of fairness} concept, which quantifies the relative loss in utility between the utility-maximizing solution and the maximally fair solution. \cite{michorzewski2020price} extend their results when agents have dichotomous preferences, and \cite{dickerson2014price} and \cite{mcelfresh2018balancing} study the price of fairness in kidney exchange.

\section{Partitioning the agents}
\label{sec:partitioning}
When the set of optimal solutions~$\mathcal{S}$ cannot be fully enumerated, the partitioning of the set of agents~$\mathcal{A}$ into the disjoint subsets~$\mathcal{Y}$, $\mathcal{N}$, and~$\mathcal{M}$ is crucial to obtain fair selection probabilities for the agents involved. Indeed, when this is not done systematically, an agent that actually belongs to~$\mathcal{M}$ might be falsely considered to belong to~$\mathcal{Y}$ or to~$\mathcal{N}$, thus being unrightfully advantaged, resp.\ disadvantaged, compared to the other in agents in~$\mathcal{M}$. The greedy covering procedure by \cite{farnadi2021individual}, for example, which identifies a subset $\Sol'$ of the optimal solutions such that each agent in $\M$ appears in at least one solution in~$\Sol'$, may falsely consider agents to belong to $\Y$ while they actually belong to~$\M$. Consider, for example, an instance with three agents and a set of optimal solutions~$\Sol = \{(1,1,0),(1,0,1),(0,1,1)\}$. While all agents belong to~$\M$, any greedy covering only contains two solutions in~$\Sol$, and will falsely consider one of the agents to belong to~$\Y$ instead of~$\M$.

The following proposition shows that this partitioning can be done by calling the solver at most $n + 1$ times for ILPs in~$\Xi$ that differ from the original formulation in at most one constraint, regardless of the size of~$\mathcal{S}$.

\begin{proposition}
\label{prop:partitioning_YNM}
Given an integer linear program~$\xi \in \Xi$, we can partition the set of agents~$\mathcal{A}$ into disjoint subsets~$\mathcal{Y}$, $\mathcal{N}$ and~$\mathcal{M}$ by solving at most $n+1$ integer linear programs in~$\Xi$ that differ in at most one constraint from~$\xi$.
\end{proposition}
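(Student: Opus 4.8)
The plan is to solve $\xi$ once to learn the optimal value $z^*$ together with one reference optimal solution, and then to run a single additional test per agent, exploiting the fact that the reference solution already settles one of the two relevant cases for free. First I would solve $\xi$, obtaining $(\x^*,\y^*)\in\Sol$ and $z^*=\bm{v}^\intercal\x^*+\bm{w}^\intercal\y^*$; this is the first of the $n+1$ solves. For each agent $i\in\A$ I then branch on the value $x_i^*$ in this reference solution.

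If $x_i^*=1$, the reference solution already certifies that $i$ is selected in at least one optimal solution, so $i\notin\N$, and it only remains to decide between $\Y$ and $\M$. I would solve the ILP obtained from $\xi$ by appending the single constraint $x_i=0$, which is again an instance of $\Xi$ (fixing a binary variable amounts to adding one row to $\bm{A}$ and one entry to $\bm{c}$). Because adding a constraint can only shrink the feasible region, the modified optimum is at most $z^*$, and it equals $z^*$ exactly when some optimal solution of $\xi$ has $x_i=0$. Hence if the modified optimum equals $z^*$ I conclude $i\in\M$ (one optimal solution deselects $i$ while the reference one selects it), and otherwise---the modified optimum is strictly below $z^*$ or the problem is infeasible---no optimal solution deselects $i$, so $i\in\Y$. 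The case $x_i^*=0$ is symmetric: the reference solution shows $i\notin\Y$, and I would instead append the constraint $x_i=1$, placing $i$ in $\M$ if the modified optimum equals $z^*$ and in $\N$ otherwise.

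Each of these $n$ agent-specific solves differs from $\xi$ in exactly one constraint and is itself in $\Xi$, so together with the initial solve the procedure uses at most $n+1$ calls, which is the claimed bound; the resulting classes are disjoint and cover $\A$ since every agent is assigned to exactly one of them. The main point to get right is the asymmetric bookkeeping that makes a single extra solve per agent sufficient rather than the naive two solves testing both $x_i=0$ and $x_i=1$: the reference optimal solution already furnishes a witness for one side, so only the opposite fixing must be checked. The only subtlety to state carefully is that the test compares against the exact value $z^*$; because fixing a variable cannot raise the optimum, membership in $\M$ is characterized precisely by the fixed problem attaining $z^*$, as opposed to falling strictly below it or becoming infeasible.
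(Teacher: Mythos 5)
Your proof is correct and follows essentially the same approach as the paper: solve $\xi$ once for a reference optimum, then for each agent solve one ILP with the single added constraint $x_i = 1-x_i^*$, classifying $i$ into $\M$ exactly when the modified optimum equals $z^*$ and into $\Y$ or $\N$ (according to $x_i^*$) otherwise. The paper's proof is identical in substance, merely stated more tersely.
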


\begin{proof}
First, we find an optimal solution $(\x^*,\y^*)$ for $\xi$. Next, for each agent $i\in\A$, we solve at most one ILP $\xi_i$ in $\Xi$ which is identical to $\xi$ with the additional constraint that $x_i = 1-x^*_i$. If the optimal objective value of $\xi_i$ is not equal to the optimal objective value of $\xi$, or if~$\xi_i$ is infeasible, then agent $i$ belongs to $\Y$ if $x_i^*=1$, and to $\N$ if $x^*_i=0$. If $\xi_i$ and $\xi$ have the same optimal objective value, however, then $i\in\M$. 

Note that careful bookkeeping can potentially reduce the number of ILPs to be solved. If an optimal solution has already been computed in which $x_j=1-x^*_j$ for an agent $j\in\A$, then $j\in\M$, and $\xi_j$ does not have to be optimized.
\end{proof}

In the remainder of this paper, we will assume that we know the partitioning of the set of agents~$\mathcal{A}$ into~$\mathcal{Y}$, $\mathcal{N}$, and~$\mathcal{M}$, unless stated otherwise. Moreover, because any lottery~$\bm{\lambda}\in\Delta(\mathcal{S})$ will always, resp.\ never, select the agents in~$\mathcal{Y}$, resp.~$\mathcal{N}$, we will only focus on the selection probabilities of the agents in~$\mathcal{M}$. Considering that any ILP~$\xi \in \Xi$ can be transformed into an equivalent ILP in which the agents in~$\mathcal{Y} \cup \mathcal{N}$ are replaced by parameters, we assume the set of agents~$\mathcal{A}$ to be equal to~$\mathcal{M}$ in the remainder of this paper, unless stated otherwise.

%Before discussing the different distributions, we would like to stress the generality of the integer programs that are considered in the set~$\Xi$. Indeed, although some reformulations might be needed, the set~$\Xi$ contains a wide variety of formulations. Examples include various zero-one knapsack variants, scheduling problems in which the number of tardy jobs is minimized, facility location, or assignment problems in which being selected is more important for the agents than to whom they are assigned, e.g., kidney exchange in which the total number of transplants is maximized.

\section{Distribution rules}
\label{sec:distributions}
In this section, we will introduce several distribution rules and their computational properties. We propose frameworks to find distributions optimizing a linear or a concave objective function, and we illustrate the proposed frameworks to find distributions maximizing the egalitarian and the Nash social welfare. Note, however, that our frameworks can be easily modified to find distributions optimizing other objective functions. Moreover, we propose a method to apply the \emph{Random Serial Dictatorship} rule, which has been extensively studied in the social choice and matching literature, to our setting.

In general, we can distinguish two different ways to realize a distribution~$\bm{d}$ for a specific integer linear program in practice. First, one could explicitly find a lottery~$\bm{\lambda}$ that realizes~$\bm{d}$, together with the optimal solutions in~$\mathcal{S}$ with a strictly positive weight, and then select solution~$(\bm{x^s},\bm{y^s})\in \mathcal{S}$ with probability~$\lambda_s$. Secondly, one could specify a method that outputs only one solution~$(\bm{x^s},\bm{y^s})\in \mathcal{S}$ according to an underlying lottery~$\bm{\lambda}$ that realizes~$\bm{d}$, without explicitly generating all relevant solutions in~$\mathcal{S}$. The following definition formalizes this distinction \citep[a similar distinction for probabilistic assignments has been made by][]{demeulemeester2022pessimist}.

\begin{definition}
\label{def:representation_implementation} Given an integer linear program~$\xi \in \Xi$ and a distribution~$\bm{d} \in [0,1]^n$ that is realizable over the corresponding set of optimal solutions~$\mathcal{S}$,
\begin{enumerate}[label=(\roman*)]
\item a \emph{decomposition} of~$\bm{d}$ is a tuple~$(\mathcal{S}', \bm{\lambda}')$, with~$\mathcal{S}' \subseteq \mathcal{S}$ and~$\bm{\lambda'} \in \Delta_{\bm{d}}(\mathcal{S}')$;
\item an \emph{implementation} of~$\bm{d}$ is an algorithm that randomly selects a single solution~$(\bm{x^s},\bm{y^s})\in \mathcal{S}$ according to a lottery~$\bm{\lambda}\in\Delta_{\bm{d}}(\mathcal{S})$.
\end{enumerate}
\end{definition}

By Carathéodory's theorem, for any distribution there exists a decomposition in which at most $n+1$ optimal solutions have a strictly positive weight. Such a decomposition can be found by applying the algorithm described in Theorem 6.5.11 by \cite{grotschel2012geometric}.

Note that, given an implementation, neither the distribution which it realizes, nor the underlying decomposition from which a solution is sampled are assumed to be explicitly known (see Section~\ref{subsec:RSD} for an example). Clearly, a decomposition of a distribution~$\bm{d}$ implies its implementation, but not vice versa. Given the computational complexity of generating optimal solutions in integer programming, as discussed in Section~\ref{sec:related}, obtaining a decomposition of a distribution is not always tractable. We will therefore pay special attention to cases in which we can find an implementation of a distribution without first generating its decomposition (see Section~\ref{subsec:RSD}).

\subsection{Uniform}
\label{subsec:uniform}
The \emph{uniform} distribution is the distribution resulting from selecting each solution in $\Sol$ with equal probability.
\begin{definition}
\label{def:uniform}
Given an integer linear program~$\xi \in \Xi$, the \emph{uniform} distribution~$\bm{d^U}$ is the distribution realized by lottery~$\bm{\lambda^U} = \{\frac{1}{|\mathcal{S}|}, \ldots, \frac{1}{|\mathcal{S}|}\}$.
\end{definition}

We are not aware of any general method to obtain~$\bm{d^U}$ for all ILPs in $\Xi$ without counting all optimal solutions in~$\mathcal{S}$. Although doing so is~$\#\mathcal{P}$-complete in general, as discussed in Section~\ref{sec:related}, this approach might still be tractable in practice for smaller instances \citep[see, e.g.,][]{farnadi2021individual}.

The main advantage of the uniform rule is that the random selection of one of the optimal solutions allows for a transparent implementation. Nevertheless, linking the agents' selection probabilities directly to the number of optimal solutions in which they are selected might not be considered fair in many applications.\footnote{Consider the introductory example in Section~\ref{sec:intro}: does the mere fact that the twins ($x_1$) are selected in more of the optimal solutions than each of the individual students ($x_2, x_3, x_4$) justify the higher selection probability of the twins under the uniform rule?} \cite{farnadi2021individual} and \cite{flanigan2021transparent} study the application of the uniform rule for specific problem settings.

\subsection{Leximin distribution}
\label{subsec:leximin}
Next, we discuss a distribution rule that aims to determine the selection probabilities from an egalitarian perspective. Clearly, a distribution rule that selects each agent in $\M$ with the same probability is not realizable for all instances in $\Xi$, as is illustrated in Example~\ref{ex:intro}. Therefore, we focus on a distribution that is egalitarian in nature, and that will always be realizable, by construction, namely the \emph{leximin} distribution. The intuition behind the leximin distribution is to first maximize the lowest selection probability for the agents in~$\mathcal{M}$, then to maximize the second-lowest selection probability, etc. We propose an algorithm to compute the leximin distribution by iteratively generating optimal solutions to be used in its decomposition.

For any distribution~$\bm{d}\in \mathbb{R}^n$, denote by~$\text{lex}(\bm{d})\in\mathbb{R}^n$ the vector that is obtained by reordering the elements of~$\bm{d}$ in non-decreasing order. Given an ILP~$\xi\in \Xi$, we say that a distribution~$\bm{d}\in[0,1]^n$ \emph{lexicographically dominates} a distribution~$\bm{q}\in[0,1]^n$ when either~$\text{lex}(\bm{d})_1 > \text{lex}(\bm{q})_1$, or there exists an index~$i\in\{2, \ldots, n\}$ such that~$\text{lex}(\bm{d})_i > \text{lex}(\bm{q})_i$ while~$\text{lex}(\bm{d})_j = \text{lex}(\bm{q})_j$ for all~$1\leq j<i$.

\begin{definition}
    \label{def:leximin}
    Given an integer linear program~$\xi \in \Xi$, a \emph{leximin} distribution~$\bm{d^L}\in\text{Conv}(\mathcal{S}_{\bm{x}})$ is a distribution that is not lexicographically dominated by any other distribution~$\bm{q}\in\text{Conv}(\mathcal{S}_{\bm{x}})$.
\end{definition}

Note that there will be a unique leximin distribution for each ILP $\xi\in\Xi$. Imagine, by contradiction, that there would be two leximin distributions~$\bm{p}$ and $\bm{q}$. Then the average of~$\bm{p}$ and $\bm{q}$ would lexicographically dominate both~$\bm{p}$ and $\bm{q}$.

Unlike for the uniform distribution, it is possible to find a decomposition of~$\bm{d^L}$ without counting the number of solutions in~$\Sol$ by using a similar approach as \citet[][Theorem 1]{airiau2022portioning}. Each iteration of our algorithm consists of two steps, denoted as the upper and the lower problem. In the upper problem, we start by identifying the largest value such that all agents whose selection probabilities have not yet been fixed in the previous iterations can be selected with at least that probability. Next, in the lower problem, we identify the agents whose selection probabilities are exactly equal to this value in the leximin distribution, we fix their selection probabilities to the obtained value, and we proceed to the next iteration. Whereas \cite{airiau2022portioning} explicitly know the set of possible outcomes, however, we will adopt a \emph{column generation} approach in each step of the algorithm to avoid full enumeration of the optimal solutions. %The idea behind column generation is to first solve the problem for a subset of the variables in the \emph{restricted master problem}, and then to verify whether the found solution is optimal over all variables. If this is not the case, in a separate \emph{pricing problem} we identify a variable that should be included in the restricted master problem, and we repeat the procedure until we find an optimal solution over all variables. 

In each iteration~$t$ of the algorithm, let $N_t\subseteq \M$ denote the set of agents whose selection probabilities have been fixed in the previous iterations, where $N_t=\varnothing$ in the first iteration. First, we find the largest value $\gamma$ for which there still exists a distribution $\bm{d}\in\text{Conv}(\Sol_{\x})$ that selects all agents in $\M\setminus N_t$ with a probability of at least $\gamma$. We will do this by solving the column generation framework $\left[\text{RMP}_t\right]$, which corresponds to the upper problem of our algorithm in iteration~$t$. Consider an ILP~$\xi \in \Xi$, and denote by~$\tilde{\mathcal{S}} \subseteq \mathcal{S}$ the subset of the optimal solutions that we initially include in the restricted master problem. Then we set~$\gamma^*$ equal to the objective value of the following linear program $\left[\text{RMP}_t\right]$, where decision variable~$\lambda_s$ refers to the weight of solution~$(\bm{x^s},\bm{y^s})\in\tilde{\mathcal{S}}$ in the corresponding lottery, and where~$d^L_i$ refers to the selection probabilities that were fixed in the previous iterations for the agents in~$N_t$.

\begin{subequations}
	\begin{align}
	&\left[\text{RMP}_t\right] & &\max &\gamma &&\\ 
	& & &\: \text{s.t.} &\sum_{s=1}^{|\tilde{\mathcal{S}}|} \lambda_s x^s_i &\geq \gamma  &\forall \; i \in \mathcal{M}\setminus N_t,\label{con:rmp_1}\\
	& & & &\sum_{s=1}^{|\tilde{\mathcal{S}}|} \lambda_s x^s_i &= d^L_i  &\forall \; i \in N_{t},\label{con:rmp_4}\\
	& & & &\sum_{s=1}^{|\tilde{\mathcal{S}}|}  \lambda_s &= 1,  & \label{con:rmp_2}\\
	& & & &\lambda_{s} &\geq 0  &\forall \; s\in\{1, \ldots, |\tilde{\mathcal{S}}|\}. \label{con:rmp_3}
	\end{align}
\end{subequations}

Denote the dual variables related to constraints~(\ref{con:rmp_1}), (\ref{con:rmp_4}), and~(\ref{con:rmp_2}) by~$\bm{\mu}\in\mathbb{R}^{|\mathcal{M}\setminus N_t|}_-$, $\bm{\nu}\in\mathbb{R}^{|N_t|}$, and~$\rho\in\mathbb{R}$, respectively. Then an optimal solution for~$\left[\text{RMP}_t\right]$, with dual variables~$\bm{\mu^*},\bm{\nu^*}$ and~$\rho^*$, is optimal over all solutions in~$\mathcal{S}$ if no solution~$\bm{x^s}\in\mathcal{S}_{\bm{x}}$ has a positive \emph{reduced cost}. This means that for all solutions~$\bm{x}\in\mathcal{S}_{\bm{x}}$ the following should hold:

\begin{equation}
\label{eq:reduced_cost}
-\sum_{i \in \mathcal{M}\setminus N_t}\mu^*_ix_i -\sum_{i \in N_t}\nu^*_ix_i - \rho^* \leq 0.
\end{equation}

The pricing problem then consists of the constraints of the original problem~$\xi$, and an additional constraint to enforce that the original objective value is optimal, i.e.,~$\bm{v}^\intercal \bm{x} + \bm{w}^\intercal \bm{y} = z^*$, while the objective function of the pricing problem maximizes the left-hand side of Equation~(\ref{eq:reduced_cost}). If the pricing problem finds a solution~$\bm{x'}\in\mathcal{S}_{\bm{x}}\setminus\tilde{\mathcal{S}_{\bm{x}}}$ with a strictly positive reduced cost, then~$\bm{x'}$ is added to~$\tilde{\mathcal{S}_{\bm{x}}}$ and the restricted master problem $\left[\text{RMP}_{t}\right]$ is solved again. An optimal solution~$\gamma^*$ over~$\mathcal{S}$ is found in iteration~$t$ when the pricing problem cannot find a solution with a strictly positive objective value.

Next, after we have found~$\gamma^*$, we want to identify the agents in~$\M$ that are selected with a probability equal to~$\gamma^*$ in the leximin distribution by solving the lower problem. Note that simply fixing the probabilities for all agents for whom constraints (\ref{con:rmp_1}) are binding might result in a lexicographically dominated distribution, because some of these agents might be selected with a higher probability in the leximin distribution, while other agents might be selected with probability~$\gamma^*$ as well. To verify whether agent $j\in\M\setminus N_t$ is selected with probability~$\gamma^*$, we solve the following linear program over a subset $\tilde{\Sol}\subseteq\Sol$ of the optimal solutions:

\begin{subequations}
	\begin{align}
	&\left[\text{LP}_{jt}\right] & &\max &\theta &&\\ 
	& & &\: \text{s.t.} &\sum_{s=1}^{|\tilde{\mathcal{S}}|} \lambda_s x^s_j &\geq \gamma^*+\theta,  &\label{con:lp_1}\\
	& & & &\sum_{s=1}^{|\tilde{\mathcal{S}}|} \lambda_s x^s_i &\geq \gamma^*  &\forall \; i \in \mathcal{M}\setminus N_t,\label{con:lp_2}\\
	& & & &\sum_{s=1}^{|\tilde{\mathcal{S}}|} \lambda_s x^s_i &= d^L_i  &\forall \; i \in N_{t},\label{con:lp_3}\\
	& & & &\sum_{s=1}^{|\tilde{\mathcal{S}}|}  \lambda_s &= 1,  & \label{con:lp_4}\\
	& & & &\lambda_{s} &\geq 0  &\forall \; s\in\{1, \ldots, |\tilde{\mathcal{S}}|\}. \label{con:lp_5}
	\end{align}
\end{subequations}

Similarly to the column generation procedure for the upper problem $\left[\text{RMP}_t\right]$, denote the dual variables of constraints (\ref{con:lp_1})-(\ref{con:lp_4}) by $\pi\in\mathbb{R}_{-}$, $\bm{\mu}\in\mathbb{R}^{|\mathcal{M}\setminus N_t|}_-$, $\bm{\nu}\in\mathbb{R}^{|N_t|}$, and~$\rho\in\mathbb{R}$, respectively. Then an optimal solution for $\left[\text{LP}_{jt}\right]$ with dual variables $\pi^*$, $\bm{\mu^*}$, $\bm{\nu^*}$, and~$\rho^*$ is optimal over all solutions in~$\Sol$ if for all solutions $\x\in\Sol_{\x}$ it holds that

\begin{equation}
\label{eq:reduced_cost2}
-\pi^*x_j - \sum_{i \in \mathcal{M}\setminus N_t}\mu^*_ix_i -\sum_{i \in N_t}\nu^*_ix_i - \rho^* \leq 0.
\end{equation}

Hence, the pricing problem of formulation $\left[\text{LP}_{jt}\right]$ consists of maximizing the left-hand side of Equation~(\ref{eq:reduced_cost2}) over the constraints of the original ILP~$\xi$, and an additional constraint~$\bm{v}^\intercal \bm{x} + \bm{w}^\intercal \bm{y} = z^*$ to enforce that the original objective value is optimal. If the pricing problem finds a solution~$\bm{x'}\in\mathcal{S}_{\bm{x}}\setminus\tilde{\mathcal{S}_{\bm{x}}}$ with a strictly positive reduced cost, then~$\bm{x'}$ is added to~$\tilde{\mathcal{S}_{\bm{x}}}$ and the restricted master problem $\left[\text{LP}_{jt}\right]$ is solved again. An optimal solution~$\theta^*$ over~$\mathcal{S}$ is found in iteration~$t$ when the pricing problem cannot find a solution with a strictly positive objective value.

If $\theta^* = 0$ is the optimal objective value of $\left[\text{LP}_{jt}\right]$ over the set of all optimal solutions~$\Sol$ for an agent $j\in\M\setminus N_t$, then we add agent~$j$ to~$N_t$, and we set $d^L_j=\gamma^*$. Note that there must be at least one agent $j\in\M\setminus N_t$ for whom this is the case, because otherwise~$\gamma^*$ was not the optimal objective value of the upper problem $\left[\text{RMP}_t\right]$ over all solutions in~$\Sol$. When the lower problem $\left[\text{LP}_{jt}\right]$ has been solved for all agents $j \in \M\setminus N_t$, the algorithm proceeds to the next iteration unless $\M=N_t$.

Clearly, the described algorithms will require at most $|\mathcal{M}|$ iterations to output a decomposition $(\tilde{\Sol},\bm{\lambda})$ of~$\bm{d^L}$, where~$\bm{\lambda}$ are the weights that are found in the last linear program that was solved, and $\tilde{\Sol}\subseteq\Sol$ is the subset of the optimal solutions that has been generated throughout the algorithm. This implies that the upper problem $\left[\text{RMP}_t\right]$ should be solved at most $|\M|$ times to optimality over all solutions in~$\Sol$, and that formulation $\left[\text{LP}_{jt}\right]$ in the lower problem should be solved at most $\frac{|\M|}{2}(|\M|+1)$ times to optimality over all solutions in~$\Sol$. Generally, later iterations will be less computationally heavy, because the solution distribution from the previous iteration and the corresponding subset of optimal solutions can be used as a \enquote{warm start} by a solver.

We are not aware of any method to directly obtain an implementation of~$\bm{d^L}$ without first constructing one of its decompositions.

\subsection{Custom selection criteria}
\label{subsec:custom_selection_criteria}
Given an integer linear program~$\xi \in \Xi$, assume that a decision-maker wants to find a distribution~$\bm{d^f}$ that minimizes a convex function~$f:\mathbb{R}^{n}\to\mathbb{R}$, or, equivalently, that maximizes a concave function~$-f$. The choice of this function is problem-specific. One could, for example, minimize the $k$-norm 
\begin{equation}
    L_k(\bm{d}) = \left(\sum_{i\in\mathcal{M}}d^k_i\right)^{\frac{1}{k}},
\end{equation}
\noindent for a real number~$k\geq 1$ \citep{farnadi2021individual}. Alternatively, one could maximize the geometric mean

\begin{equation}
    f^N(\bm{d}) = \left(\prod_{i\in\mathcal{M}}d_i\right)^{\frac{1}{|\mathcal{M}|}}.
\end{equation}
\noindent While the solution that maximizes the geometric mean is also known as the \emph{Nash (bargaining) solution} \citep{nash1950bargaining} in the related literature on cooperative bargaining games, it is rather known as the \emph{maximum Nash welfare} solution in social choice literature \citep[e.g.,][]{caragiannis2019unreasonable}. %\textit{\textcolor{red}{(State some general results about complexity of Nash, i.e., \cite{airiau2022portioning}, \cite{halpern2020fair}, \cite{suksompong2022maximum}, or \cite{akrami2021nash}.)}}

Assuming the full set of optimal solutions~$\mathcal{S}$ for a given ILP $\xi\in\Xi$ is known, we can find a decomposition of~$\bm{d^f}$ using the following formulation $\left[\mathcal{C}_f(\Sol)\right]$, where the decision variables~$\bm{d^f}\in[0,1]^n$ and~$\bm{\lambda}\in\Delta_{\bm{d^f}}(\mathcal{S})$ represent a distribution and a realizing lottery, respectively, and each element $(\bm{x^s},\bm{y^s})\in\Sol$ corresponds to an optimal solution of~$\xi$.

\begin{subequations}
	\begin{align}
	& \left[\mathcal{C}_f(\Sol)\right]& &\min &f(\bm{d^f}) &&\label{con:general_obj}\\ 
	& & &\: \text{s.t.} &\sum_{s=1}^{|\mathcal{S}|} \lambda_s \bm{x^s} &= \bm{d^f},  &\label{con:general_1}\\
	& & & &\sum_{s=1}^{|\mathcal{S}|}  \lambda_s &= 1,  & \label{con:general_2}\\
	& & & &\lambda_{s} &\geq 0  &\forall \; s\in\{1, \ldots, |\mathcal{S}|\}. \label{con:general_3}
	\end{align}
\end{subequations}

When the set of optimal solutions~$\mathcal{S}$ is not known and cannot be fully enumerated efficiently, however, the form of the objective function~$f$ plays a crucial role. For a linear objective function~$f$, such as the arithmetic mean, a \enquote{classical} column generation approach as described for formulation $\left[\text{RMP}_t\right]$ in Section~\ref{subsec:leximin} can be adopted in a straightforward way. For a non-linear objective function~$f$, however, this approach is no longer possible, and we will discuss how to adapt the column generation procedure for convex programs that satisfy strong duality. 

A similar approach has been recently proposed in Section 8 of the Supplementary information by \cite{flanigan2021fair} for sortition, which is a special case of the ILPs in class~$\Xi$. \cite{zangwill1967convex} propose a simplex-type algorithm that allows for delayed variable generation, but our particular problem setting allows for a simpler optimality condition than his general setting. While our framework is similar in spirit to \emph{simplical decomposition} \citep[e.g.,][]{holloway1974extension, von1977simplicial}, as it iteratively uses the gradient to identify an improving solution, our framework maintains the original non-linear objective function in the restricted master problem, while simplical decomposition methods rely on its linear inner approximation. \cite{chicoisne2023computational} provide a recent overview on column generation methods for non-linear optimization problems.

Denote by $[\mathcal{C}_f(\tilde{\Sol})]$ the variant of formulation $[\mathcal{C}_f(\Sol)]$ which minimizes a differentiable and convex function~$f$ over a subset $\tilde{\Sol}\subseteq\Sol$ of the optimal solutions. Denote the dual variables related to constraints (\ref{con:general_1})-(\ref{con:general_3}) in $[\mathcal{C}_f(\tilde{\Sol})]$ by $\bm{\mu}\in\mathbb{R}^{|\M|}$, $\rho\in\mathbb{R}$, and $\bm{\nu}\in\mathbb{R}^{|\tilde{\Sol}|}$. The column generation procedure proceeds as follows. In each iteration~$t$, we solve $[\mathcal{C}_f(\tilde{\Sol}^t)]$, where $\tilde{\Sol}^{1}$ is some non-empty subset of~$\Sol$ in the first iteration, and~$\tilde{\Sol}^t$ is defined in the previous iteration, otherwise. Let $(\bm{\lambda^*},\bm{d^*})$ denote an optimal solution to the primal problem $[\mathcal{C}_f(\tilde{\Sol}^t)]$ with dual variables $\bm{\mu^*}$, $\rho^*$ and $\bm{\nu^*}$, and let $(\bm{x'},\bm{y'})$ be an optimal solution to the convex program $\min\{\sum_{i\in\M}\mu^*_ix_i:(\bm{x},\bm{y})\in\Sol\}$. As we discuss in detail in Appendix~\ref{app:proof_of_correctness}, program $[\mathcal{C}_f(\tilde{\Sol}^t)]$ satisfies strong duality, and the Karush-Kuhn-Tucker conditions therefore imply that in its primal optimum $\mu_i^*=\frac{\partial}{\partial d_{i}} f(\bm{d^*})$ should hold. We have found a distribution~$\bm{d^*}$ which minimizes~$f$ over the solutions in~$\Sol$, and a corresponding decomposition $(\tilde{\Sol}^t,\bm{\lambda^*})$ of $\bm{d^*}$, if the following optimality condition holds: 
\begin{equation}
    \label{eq:optimality_condition}
    \sum_{i\in\M}\mu^*_ix^{old}_i\leq\sum_{i\in\M}\mu^*_ix'_i,
\end{equation}
where $(\bm{x^{old}},\bm{y^{old}})\in\tilde{\Sol}^t$ is any solution with $\lambda_{old}^*>0$. Note that the left-hand side of Condition~(\ref{eq:optimality_condition}) has a constant value of $-\rho^*$. If Condition~(\ref{eq:optimality_condition}) does not hold, we let $\tilde{\Sol}^{t+1} = \tilde{\Sol}^{t}\cup \{(\bm{x'},\bm{y'})\}$, and proceed to the next iteration.

In Appendix~\ref{app:proof_of_correctness}, we prove the correctness of this procedure by showing that it terminates after a finite number of iterations, and by showing that Condition~(\ref{eq:optimality_condition}) indeed implies the optimality of a distribution~$\bm{d^*}$.
    
\subsection{Random Serial Dictatorship}
\label{subsec:RSD}
Informally speaking, the \emph{Random Serial Dictatorship} (RSD) distribution~$\bm{d^{RSD}}$ is the expected outcome of the following procedure. After randomly ordering the agents, the first agent in this order selects the solutions in~$\mathcal{S}$ in which she is selected, then the second agent selects the solutions in which she is selected \emph{among the remaining solutions}, etc. The procedure ends when a unique solution remains, which will occur by construction.

Define $\bm{\sigma}=\left(\sigma(1),\ldots,\sigma(|\M|)\right)$ to be a strict ordering over the agents in~$\mathcal{M}$, and denote the set of all orderings by~$\Sigma$. Moreover, consider the \emph{Serial Dictatorship} function~$SD:\Xi\times\Sigma\to\mathcal{S}$, which, given an ILP~$\xi \in \Xi$ and an ordering~$\bm{\sigma}\in\Sigma$, will output one of the solutions in~$\mathcal{S}$ according to the procedure described above. Our definition of SD coincides with the common definition in voting \citep[e.g.,][]{aziz2014parametrized}. Using this notation, we can define the RSD distribution as follows.

\begin{definition}
\label{def:RSD}
    Given an integer linear program~$\xi \in\Xi$, the \emph{Random Serial Dictatorship} (RSD) distribution~$\bm{d^{RSD}}\in[0,1]^n$ is given by
    \begin{equation}
        \bm{d^{RSD}} = \frac{1}{|\mathcal{M}|!}\sum_{\bm{\sigma} \in \Sigma} SD(\xi,\sigma).
    \end{equation}
\end{definition}

Obtaining a decomposition of~$\bm{d^{RSD}}$ is not straightforward. First, one should find all optimal solutions in~$\mathcal{S}$, because they represent the alternatives from which the agents can choose. Second, even given the set of optimal solutions, it is~$\#\mathcal{P}$-complete to determine the exact probabilities in the RSD distribution \citep{aziz2013computational}. Only when each of the agents in~$\mathcal{M}$ is selected in exactly one of the optimal solutions in~$\mathcal{S}$, the RSD distribution can be calculated in linear time \citep{aziz2013computational}.

We observe that, similarly to the assignment and the voting setting where the RSD mechanism is well-studied, computing the exact RSD probabilities is computationally challenging, whereas implementing its result is rather straightforward. In fact, we will discuss two different implementations of~$\bm{d^{RSD}}$, which both modify the formulation to enforce the random ordering of the agents in~$\mathcal{M}$.

First, assume that the objective weights~$\bm{v}$ and $\bm{w}$ of some ILP $\xi\in\Xi$ are integer. In that case, given a random ordering~$\bm{\sigma}\in\Sigma$ of the agents, one can then simply perturb the objective function of~$\xi$. Denote the order of agent~$i\in\mathcal{M}$ in~$\bm{\sigma}$ by~$\sigma^{-1}(i)$. Moreover, define the perturbation vector~$\bm{\delta} = \left(\frac{1}{2^k}\right)_{k=1}^{|\mathcal{M}|}$. When we replace the objective function of~$\xi$ by 

\begin{equation}
\label{eq:objective_perturbation}
\sum_{i\in\mathcal{A}\setminus\mathcal{M}}v_ix_i + \sum_{i\in \mathcal{M}}\left(v_i + \delta_{\sigma^{-1}(i)}\right)x_i + \bm{w}^\intercal\bm{y},
\end{equation}
\noindent each solution~$(\bm{x^s},\bm{y^s})\in\mathcal{S}$ will be found according to an underlying lottery~$\bm{\lambda^{RSD}}$ that realizes~$\bm{d^{RSD}}$. To show that perturbation~$\bm{\delta}$ obtains the desired result, note that any perturbation~$\bm{\delta'}\in\mathbb{R}^{|\mathcal{M}|}$ will implement~$\bm{d^{RSD}}$ if the two following requirements are satisfied.
\begin{enumerate}[label=(\roman*)]
\item The obtained solution after the perturbation should still be optimal to the original problem. Because of our assumption that the objective coefficients~$\bm{v}$ and~$\bm{w}$, and decision variables~$\bm{x}$ and~$\bm{y}$ are integer, the difference between the objective values of an optimal and a non-optimal solution is greater than or equal to one. Hence, a perturbation~$\bm{\delta'}$ should satisfy~$\sum_{i=1}^{|\mathcal{M}|}\delta'_i < 1$.
\item The order of the agents in the random ordering~$\bm{\sigma}\in\Sigma$ should be respected. Given integer $\bm{v}, \bm{w}, \bm{x},$ and $\bm{y}$, a sufficient condition for this requirement to hold is that a perturbation~$\bm{\delta'}$ satisfies $\delta'_i > \sum_{j:j>i}\delta'_j$.
\end{enumerate}
Clearly, perturbation~$\bm{\delta}$ satisfies both requirements. A clear advantage of using objective perturbation~$\bm{\delta}$ is that we can implement~$\bm{d^{RSD}}$ by solving one ILP in $\Xi$ that only differs from the original formulation by its perturbed objective function. Indeed, although we assumed that the partition of the set of agents into~$\mathcal{Y}$,~$\mathcal{N}$, and~$\mathcal{M}$ is known, it is also possible to extend~$\bm{\delta}$ such that it contains a value for each of the agents in~$\mathcal{A}$, and to then solve~$\xi$ with the perturbed objective function for some random ordering of the agents in~$\mathcal{A}$. A drawback of using perturbation~$\bm{\delta}$ is that numerical issues could occur for a large number of agents. More specifically, the precision of the solver might not be able to distinguish between the agents that appear at the end of the random ordering~$\bm{\sigma}$. In order to circumvent this issue for a solver with precision~$\omega$, one can choose to perturb only the objective coefficients of at most the first $\lfloor-\log_2(\omega)\rfloor$ agents in~$\bm{\sigma}$, fix their solution values, and to then do the same for the next $\lfloor-\log_2(\omega)\rfloor$ agents, etc.

A second method to implement~$\bm{d^{RSD}}$ neither depends on the precision of the solver, nor requires integer objective coefficients~$\bm{v}$, but iteratively solves at most~$|\mathcal{M}|$ integer linear programs in~$\Xi$ using Algorithm~$\ref{alg:implement_RSD}$. Algorithm~$\ref{alg:implement_RSD}$ will first find an optimal solution in which the first-ranked agent in~$\bm{\sigma}$ is selected. Next, the algorithm will verify whether there exists an optimal solution in which the two first-ranked agents in~$\bm{\sigma}$ are selected. If such a solution exists, we enforce that the second-ranked agent is selected in the remainder of the algorithm. The algorithm continues until all agents in $\bm{\sigma}$ have been checked in this manner.

\begin{algorithm}[hbt!]
 \caption{Iterative implementation of~$\bm{d^{RSD}}$}
    \label{alg:implement_RSD}
    \hspace*{\algorithmicindent} \textbf{Input:} $\xi \in \Xi, \bm{\sigma}\in\Sigma,$ $z^*$ \\
    \hspace*{\algorithmicindent} \textbf{Output:} $(\bm{x^s},\bm{y^s})\in\mathcal{S}$
    \begin{algorithmic}[1]
    %\State $\bm{s}\gets$ optimal solution $(\bm{x},\bm{y})$ for $\xi$ with objective value $z^*$
    \For{$i\in\{1,\ldots,|\mathcal{M}|\}$}
    \State $\xi \gets \xi$ with additional constraint~$x_{\sigma(i)} = 1$
    \State Obtain an optimal solution~$(\bm{x},\bm{y})$ with objective value~$z$ for~$\xi$
    \If{$z = z^*$}
    \State $(\bm{x^s},\bm{y^s})\gets(\bm{x},\bm{y})$
    \Else 
    \State $\xi \gets \xi$ without constraint~$x_{\sigma(i)} = 1$
    \EndIf
    \EndFor
    \end{algorithmic}
\end{algorithm}

%\subsection{Other distributions?}
%\textit{\textcolor{red}{(Study \cite{chen2022combining}, where they propose to solve a sequence of ILPs optimizing a parameterized Social Welfare Function (ranging from leximin to Nash). A problem I see with their approach is that after each step, they arbitrarily fix the utilities for one of the agents arbitrarily, which may lead to different outcomes (Example 7 for $\Delta = 2$). Introducing probabilities, like we do, solves this issue.)}}

\section{Axiomatic implications}
\label{sec:axioms}
In this section, we study which axiomatic properties are satisfied by the distribution rules described in Section~\ref{sec:distributions}. Interestingly, the following result implies that all axiomatic results that have been obtained for collective choice under dichotomous preferences \citep{bogomolnaia2004random,bogomolnaia2005collective}, also hold for distribution rules over optimal solutions of integer linear programs in~$\Xi$.

\begin{proposition}
\label{prop:solution_space}
For every possible set of outcomes $\mathcal{T}\subseteq2^\A$, there exists an ILP $\xi\in\Xi$ such that $\Sol_{\x}(\xi) = \mathcal{T}$, where $\Sol_{\x}(\xi)$ is the projection of~$\Sol(\xi)$ on the $\x$-variables. 
\end{proposition}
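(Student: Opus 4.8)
The plan is to give an explicit, constructive encoding: for an arbitrary target set $\T=\{T_1,\ldots,T_p\}\subseteq 2^\A$ with $p=|\T|$, I would build an ILP in $\Xi$ whose feasible region consists of exactly $p$ integer points, one for each $T_j$, and then render every feasible point optimal by taking a constant objective. Concretely, let $\chi^j\in\{0,1\}^n$ denote the characteristic vector of $T_j$ (so $\chi^j_i=1$ iff agent $i\in T_j$). I introduce $k=p$ auxiliary integer variables $\y=(y_1,\ldots,y_p)$ acting as selectors, together with the bounds $0\le y_j\le 1$ for all $j$ (which, combined with $\y\in\mathbb{Z}^p$, forces $y_j\in\{0,1\}$), the single equality $\sum_{j=1}^{p} y_j=1$, and the linking equalities $x_i=\sum_{j=1}^{p}\chi^j_i\,y_j$ for each $i\in\A$. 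Each equality is written as a pair of $\le$ inequalities, so that the whole system has the required form $\bm{Ax}+\bm{By}\le\bm{c}$.

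For the objective I would simply take $\bm{v}=\bm{0}$ and $\bm{w}=\bm{0}$, so that the objective value is identically zero. Then $z^*=0$ and the optimal set $\Sol$ coincides with the entire feasible region, which reduces the claim to showing that the feasible region, projected onto the $\x$-variables, equals $\T$. The key observation is that $\sum_{j} y_j=1$ together with $y_j\in\{0,1\}$ forces exactly one selector, say $y_{j^\star}=1$ with all others zero; the linking equalities then pin down $\x=\chi^{j^\star}$, which is a valid $0/1$ vector. Hence the feasible points are precisely $\{(\chi^j,\bm{e}_j):j=1,\ldots,p\}$, and their projection onto $\x$ is $\{\chi^1,\ldots,\chi^p\}$, i.e.\ exactly $\T$. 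Distinctness of the $\chi^j$, which holds because the $T_j$ are distinct subsets, guarantees that no outcome is counted twice and that each $T_j$ is genuinely realized.

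I do not expect a serious mathematical obstacle here; the real work is the bookkeeping of the verification, namely checking that the construction introduces no spurious feasible $\x$-vectors beyond those in $\T$ and that every $T_j$ is attained. The two points that warrant a little care are: (i) confirming that the only feasible selectors are unit vectors, which is where both the integrality of $\y$ and the bounds $0\le y_j\le 1$ are used; and (ii) the degenerate case $\T=\varnothing$, which cannot be the optimal set of a feasible bounded program but is realized trivially by an infeasible ILP in $\Xi$ (for instance the empty selector system $\sum_{j\in\varnothing} y_j=1$, i.e.\ $0=1$), should the statement be intended to cover it. Everything else is a direct check that the listed constraints fit the template of $\Xi$.
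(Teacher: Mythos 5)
Your construction is correct, but it takes a genuinely different route from the paper. The paper works entirely in the $\x$-space with no auxiliary variables: it takes a constant objective and adds one \emph{no-good cut} $\sum_{i:o^t_i=0}x_i+\sum_{i:o^t_i=1}(1-x_i)\ge 1$ for every outcome $\bm{o^t}\in 2^{\A}\setminus\T$, so the feasible region (hence, under the constant objective, the optimal set) is literally $\T$ and no projection argument is needed. You instead enumerate the \emph{allowed} outcomes via a disjunctive selector encoding: $|\T|$ binary auxiliary variables $y_j$ with $\sum_j y_j=1$ and linking equalities $x_i=\sum_j\chi^j_i y_j$, then project out $\y$. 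Both proofs are valid and both handle the degenerate case $\T=\varnothing$ by producing an infeasible program (which is still of the form defining $\Xi$). The trade-off is one of size and of which part of $2^{\A}$ you encode: your ILP has $O(n+|\T|)$ constraints and $n+|\T|$ variables, so it is compact when $\T$ is small, whereas the paper's has up to $2^n-|\T|$ constraints but no $\y$-variables, and is compact when $\T$ is close to all of $2^{\A}$; your encoding also requires the (easy but necessary) verification that integrality and the bounds force a unit selector, a step the paper's construction avoids entirely. For the purpose of Corollary~\ref{cor:axiomatic} --- showing $\Xi$ is rich enough to realize every outcome set --- the two arguments are interchangeable.
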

\begin{proof}
Consider an arbitrary set of possible outcomes $\T = \{\bm{o^1},\ldots,\bm{o^{|\T|}}\}$. Let $\xi_{\T}\in\Xi$ be an ILP with decision variables $\x\in\{0,1\}^n$ and a constant objective function. Additionally, let $\xi_{\T}$ have one \emph{no-good cut} for each of the outcomes in the complement of~$\T$, i.e.,
\begin{align}
    &\sum_{i\in\A:o^t_i = 0}x_i + \sum_{i\in\A:o^t_i = 1}(1-x_i) \geq 1 &\forall\; \bm{o^t}\in2^{\A}\setminus\T
\end{align}
Each of these constraints will forbid exactly one outcome that is not in~$\T$. As a result, it holds that $\Sol_{\x}(\xi_{\T}) = \T$.
\end{proof}

\begin{corollary}
\label{cor:axiomatic}
    All axiomatic results that have been obtained for collective choice under dichotomous preferences (fair mixing) also hold for distribution rules over optimal solutions of integer linear programs in $\Xi$.
\end{corollary}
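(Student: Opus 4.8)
The plan is to treat Proposition~\ref{prop:solution_space} as the single substantive ingredient: once we know that the assignment $\xi \mapsto \Sol_{\x}(\xi)$ ranges over \emph{every} (nonempty) subset of $2^{\A}$, the corollary reduces to the observation that our distribution rules and the rules studied in the fair mixing literature live on the very same domain. First I would recall the primitive of collective choice under dichotomous preferences: an instance is a nonempty set $\T \subseteq 2^{\A}$ of feasible outcomes, each agent $i \in \A$ has the dichotomous preference of favouring exactly the outcomes in which $i$ is selected, and a rule maps $\T$ to a vector of selection probabilities in $[0,1]^n$, interpreted as the agents' canonical utilities. An axiom is then a universally quantified property of such a rule, i.e.\ a statement of the form ``for all instances (and all auxiliary instances obtained from them by the operations named in the axiom), the rule satisfies $\ldots$''.

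Second, I would make precise that the agents' selection probabilities, and hence the rules relevant here, depend on $\xi$ only through $\Sol_{\x}(\xi)$: the $\y$-variables and the particular constraint description of $\xi$ are irrelevant once the projected optimal solution set is fixed. Consequently each distribution rule $f$ from Section~\ref{sec:distributions} factors as $f = g \circ \Sol_{\x}$ for a collective choice rule $g$ defined on sets of outcomes, and conversely every such $g$ induces a distribution rule on $\Xi$. This is exactly the interchangeable use of $f(\xi)$ and $f(\Sol)$ already announced in Section~\ref{sec:preliminaries}.

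Third --- the crux --- I would invoke Proposition~\ref{prop:solution_space} to conclude that $\Sol_{\x}$ is surjective onto the nonempty outcome sets: for every $\T$ there is $\xi_{\T} \in \Xi$ with $\Sol_{\x}(\xi_{\T}) = \T$. Surjectivity is precisely what makes the transfer work in both directions. A given axiom $P$ holds for the collective choice rule $g$ if and only if it holds for $f = g \circ \Sol_{\x}$: every instance appearing in the statement of $P$ can be pulled back, via $\xi_{(\cdot)}$, to a concrete ILP in $\Xi$ having it as its projected optimal solution set, so the hypotheses and the conclusion of $P$ translate verbatim between the two models. The same equivalence applies to impossibility results, since the witnessing instances are themselves representable in $\Xi$.

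The main obstacle is not the logic of the transfer but checking that the richness required by the axioms is genuinely available. Many axioms in this literature are variable-population or consistency-type statements relating an instance to others obtained by adding or removing agents, restricting to subsets of outcomes, or combining instances; the worry is whether the class $\{\Sol_{\x}(\xi) : \xi \in \Xi\}$ of representable outcome sets is closed under these operations. Proposition~\ref{prop:solution_space} dispels this worry in one stroke: the class is \emph{all} of $2^{2^{\A}}$, so it is trivially closed under every such operation and no instance invoked by an axiom falls outside $\Xi$. The remaining care is purely bookkeeping --- verifying that the dichotomous-preference profile induced coordinatewise by an outcome set matches the primitive used by \citet{bogomolnaia2004random, bogomolnaia2005collective} --- after which each of their axiomatic results can be quoted unchanged.
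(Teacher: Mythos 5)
Your proposal is correct and follows essentially the same route as the paper: the paper's own (very terse) justification is exactly that fair mixing imposes no restrictions on the set of possible outcomes, so Proposition~\ref{prop:solution_space} (surjectivity of $\xi \mapsto \Sol_{\x}(\xi)$ onto $2^{\A}$'s subsets) makes the two domains coincide and lets every axiomatic result transfer. Your write-up merely makes explicit what the paper leaves implicit --- the factoring $f = g \circ \Sol_{\x}$, the role of surjectivity for negative and characterization results, and closure under the operations invoked by variable-population axioms --- all of which is consistent with the paper's follow-up discussion of positive versus negative results on restricted subclasses.
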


Corollary \ref{cor:axiomatic} follows from the observation that there are no imposed constraints on the set of possible outcomes in collective choice under dichotomous preferences. Note that Proposition~\ref{prop:solution_space} is necessary for this result in order to show that the considered class of ILPs $\Xi$ is sufficiently rich, i.e., that its set of optimal solutions can be equal to any subset of $2^{\A}$. For relevant axiomatic properties in probabilistic social choice under dichotomous preferences, we refer the reader to \cite{bogomolnaia2004random}, \cite{bogomolnaia2005collective}, \cite{duddy2015fair}, \cite{aziz2019fair}, and \cite{brandl2021distribution}. 

%Because there are no constraints on the feasible set in cooperative bargaining (except that it is closed and convex), and on the set of possible outcomes in probabilistic social choice with dichotomous preferences, this implies that all axiomatic results for both settings also hold for fair integer programming \textit{\textcolor{red}{(State this as the Proposition.)}}. For relevant axiomatic properties in cooperative bargaining, we refer the reader to \textit{\textcolor{red}{(Add nice citations for properties with binary valuations)}}, and 

When studying a specific class of problems that can be modeled by an ILP in~$\Xi$, such as kidney exchange or knapsack, it may be the case that there exist sets of outcomes that do not correspond to the set of optimal solutions of any instance of that specific problem class. To illustrate this, one can observe, for example, that the set $\{(1,0),(1,1)\}$ cannot correspond to the optimal solutions of any knapsack instance with two items in which the weights of the items in the objective function are strictly positive. Regardless, Corollary \ref{cor:axiomatic} still has the following implications with respect to the validity of the axiomatic results from collective choice under dichotomous preferences for such a specific subproblem that can be modeled by an ILP in~$\Xi$. First, all \emph{positive} results of the type \enquote{(rule) satisfies (axiom)} remain valid. Second, the \emph{negative} results of the type \enquote{(rule) does not satisfy (axiom)} are not guaranteed to hold for specific subproblems. To prove such negative axiomatic results for a specific class of subproblems, it suffices to provide an example instance in which a rule violates the considered axiom. Third, as a result, \emph{characterization} results of the type \enquote{(rule) is the only rule satisfying (set of axioms)} are also not guaranteed to hold for specific classes of subproblems that can be modeled by an ILP in~$\Xi$.

We will briefly discuss the axiomatic properties of the introduced distribution rules, but we refer the reader to \cite{aziz2019fair} for a complete overview.\footnote{Note that the rules we introduced are named differently in \cite{aziz2019fair}: RSD is referred to as \emph{Random Priority} (RP), leximin as \emph{Egalitarian}, and maximum Nash welfare as \emph{Nash max product} (NMP).} We will not discuss results with respect to strategy-proofness, because the type of information that is reported by the agents in fair integer programming depends on the application at hand (and will influence the constraints or the objective function). In any case, they do not simply report which of the outcomes they like, as is the case in collective choice under dichotomous preferences.

Table~\ref{tab:axioms} summarizes which axioms are satisfied by the discussed distribution rules. We say that a distribution rule is \emph{anonymous} if it treats agents symmetrically, i.e., the selection probabilities of the agents do not change when their names or labels are changed. Similarly, a distribution rule is \emph{neutral} if it treats outcomes symmetrically.

Furthermore, one could consider several proportionality axioms, which build on the idea that individuals and groups of like-minded agents should receive their \enquote{fair share} of the selection probabilities. From an individual perspective, the \emph{individual fair share} (IFS) property entails that each agent in $\M$ has at least a $\frac{1}{|\M|}$-fraction of the decision power, and is therefore selected with a probability of at least $\frac{1}{|\M|}$. Alternatively, given a subset $K\subseteq \M$ of agents who have identical preferences, i.e., who are selected in the same subset of the optimal solutions, the \emph{unanimous fair share} (UFS) property requires that each agent in $K$ is selected with a probability of at least $\frac{|K|}{|\M|}$, which is proportional to the size of that group. Clearly, UFS implies IFS. Lastly, \cite{aziz2019fair} propose two strengthenings of UFS, which impose bounds on the selection probabilities for groups of agents who are selected in the same optimal solution (\emph{average fair share}), or for coalitions of agents (\emph{core fair share}). We refer the reader to their paper for an exact definition of both properties.

Table~\ref{tab:axioms} shows that the Nash rule satisfies all of the introduced proportionality properties, and therefore provides the best guarantees to groups of agents, while RSD only satisfies UFS, and leximin only satisfies IFS. The uniform rule, which is not discussed in \cite{aziz2019fair}, even violates IFS for $|\M|\geq 3$. To illustrate this, one can observe that for a set of optimal solutions $\{(1,0,0), (0,1,0), (0,0,1), (0,1,1)\}$, the uniform rule would select the first agent with a probability of $\frac{1}{4}<\frac{1}{|\M|}=\frac{1}{3}$. Table~\ref{tab:axioms} also shows that any distribution rule that deterministically selects one of the optimal solutions, which is the approach that is currently adopted by solvers such as Gurobi \citep{gurobi2023deterministic} or CPLEX \citep{cplex2021deterministic}, violates all of the introduced proportionality properties, including the weakest axioms of anonymity and neutrality.

Lastly, a feasible distribution $\bm{d}\in\text{Conv}(\Sol_{\x})$ is \emph{Pareto-efficient} if there is no alternative distribution $\bm{d'}\in\text{Conv}(\Sol_{\x})$ such that $\bm{d'}\geq\bm{d}$, and at least one inequality $d'_i\geq d_i$ is strict. As shown by \cite{aziz2019fair}, the leximin and the Nash rules output a Pareto-efficient distribution, whereas RSD and the uniform rule violate this axiom. While the deterministic rule is Pareto-efficient when all $\x$-variables have strictly positive objective coefficients, the way in which the deterministic rule selects an optimal solution determines whether it is still Pareto-efficient when some of the objective coefficients are zero.

\begin{table}[!htb]
    \centering
    \small
    \hspace*{-5cm}
    \caption{Axiomatic properties of the discussed rules, where \enquote{determ.} refers to returning one of the optimal solutions in a deterministic way \citep{aziz2019fair}.}
    \begin{tabular}{lccccc}
    \toprule
         & \textbf{determ.\ } & \textbf{uniform} & \textbf{leximin} & \textbf{RSD} & \textbf{Nash} \\
         \midrule
         
         %\textbf{Individual rationality}& \no & \yes& \yes& \yes& \yes \\
         \textbf{Anonymity \& neutrality} & \no & \yes & \yes &\yes &\yes \\
         %\textbf{Symmetry}& \no &\yes& \yes &\yes &\yes \\
         %\textbf{Mix over Pareto-efficient outcomes} & \no & \no & \yes &\yes & \yes \\
         %\textbf{Occurrence proportionality}& \no &\yes& \no& \no& \no\\
         \textbf{Individual fair share}& \no & \no & \yes& \yes& \yes\\
         \textbf{Unanimous fair share} & \no & \no& \no& \yes& \yes\\
         \textbf{Average fair share} & \no & \no& \no& \no& \yes\\
         \textbf{Core fair share} & \no &\no& \no& \no &\yes\\
         \midrule
         \textbf{Pareto-efficiency} & - & \no & \yes &\no & \yes \\
         %\textbf{Relaxation monotonicity} & \yes & \yes& \no& \yes & \no \\
         %\textbf{$(1-\epsilon)$-robustness}& \no&\no& \no& \no& \no\\
         %\textbf{$(1-\epsilon)$-discrimination}& \no&\no & \no& \no& \no\\
        %\textbf{Known polynomial-time algorithm} & \yes & \no & \yes & \no & \no\\
        %\textbf{\quad when~$\Sol_{\x}$ is known}\\
         %\textbf{Known algorithm solving polynomial}\\
         %\textbf{\quad \# of ILPs in~$\Xi$ when~$\Sol$ is not known} & \yes & \no & \no & \no & \no\\
         %\textbf{Known implementation solving}& \yes &\no&\no&\yes&\no\\
         %\textbf{\quad polynomial \# of ILPs in~$\Xi$}\\
         \bottomrule
    \end{tabular}
    \label{tab:axioms}
\end{table}

%A first natural axiom to consider is \emph{anonymity}: a distribution rule 

\section{Cardinal preferences\label{sec:extensions}}
In this section, we study how to extend the proposed methods to problems where agents have cardinal preferences over the solutions, i.e., they associate a real value to each of the optimal solutions. 

\label{app:cardinal}
\subsection{Notation}
Consider the larger class of mixed-integer linear programs $\Theta$ which is equal to~$\Xi$ except for the fact that $\x\in\mathbb{R}^n$ instead of $\x\in\{0,1\}^n$. We then let the utility that an agent experiences when an optimal solution for a formulation in $\Theta$ is selected be equal to the value of the agent's $\x$-variable in that solution. We assume that the agents' utilities satisfy the Von Neumann-Morgenstern axioms, allowing us to compare lotteries over the optimal solutions. We additionally make the assumption that the convex hull formed by the optimal solutions of a formulation in $\Theta$ is bounded, which follows from the assumption that the agents do not experience infinite utility in any of the optimal solutions.

Instead of partitioning the agents into the sets $\Y, \M$, and $\N$ (Proposition~\ref{prop:partitioning_YNM}), we are now interested in the highest and the lowest utilities they experience from any of the optimal solutions. We define the \emph{utopia point} $\bm{u}(\xi)$ of a formulation $\xi \in \Theta$ as the point in which each of the agents receive their maximally attainable utility in any of the optimal solutions, i.e., $u_i(\xi) = \max\{x_i: (\x,\y) \in \Sol(\xi)\}$, where $\Sol(\xi)$ denotes the set of optimal solutions of~$\xi$.\footnote{The utopia point is also referred to as the \emph{ideal point} or the \emph{aspiration point} in the literature on cooperative bargaining.} Similarly, we define the \emph{dystopia point}~$\bm{o}(\xi)$ as the point in which the utility of each of the agents is equal to the lowest utility they receive in any of the optimal solutions of $\xi$, i.e., $o_i(\xi) = \min\{x_i: (\x,\y) \in \Sol(\xi)\}$. Clearly, both the utopia and the dystopia point can be found by solving $n$ modified versions of the original formulation, each minimizing/maximizing the utility of one specific agent under the additional constraint that the original objective value is equal to the optimum objective value of~$\xi$. Let~$\M$ refer to the subset of the agents for whom the values in the dystopia and the utopia point differ.

\subsection{Connection with cooperative bargaining}
\label{subsubsec:bargaining}
This generalized setting is closely related to the $n$-person cooperative bargaining problem, as the feasible region in the $n$-person bargaining problem is also generally assumed to be non-empty, convex, closed, and bounded \citep[e.g.,][]{thomson1994cooperative,peters2013axiomatic}. There are two main differences with our setting, however. First, the literature on cooperative bargaining mostly focuses on the axiomatic properties of the solution concepts, and less on computing or implementing the resulting lotteries.

Second, the general assumption in cooperative bargaining is the existence of a \emph{disagreement point} in the feasible region. This point will be the selected outcome if the agents fail to reach a unanimous agreement. Additionally, it is generally assumed that there exists another point in the feasible region that Pareto-dominates the disagreement point. In our setting, however, it is not clear which point to select as the disagreement point. In fact, a Pareto-dominated disagreement point that belongs to the feasible region might not even exist for some fair integer programming instances. Consider, for example, the linear program $\max \{x_1+x_2: x_1+x_2=1, x_1, x_2\ge0\}$, where the feasible region is the line between optimal solutions $(1,0)$ and $(0,1)$.

%In our setting, it is not clear which point to select as such a disagreement point, because it is generally assumed in cooperative bargaining games that there exists another point in the feasible region that Pareto-dominates the disagreement outcome. Simply selecting one of the optimal solutions, for example, is therefore not a suitable approach to select a disagreement outcome. In fact, such a Pareto-dominated disagreement point that belongs to the feasible region might not even exist for some fair integer programming instances. Consider, for example, the linear program $\max \{x_1+x_2: x_1+x_2=1, x_1, x_2\ge0\}$.

Nevertheless, most solution concepts from cooperative bargaining do not crucially depend on the belonging of the disagreement outcome to the feasible region. For this reason, and because of the ambiguity in choosing a disagreement point in our setting, we propose to replace the role of the disagreement point by the dystopia point $\bm{o}(\xi)$.

\subsection{Distribution rules and axiomatic results}
Whereas we measured fairness of a solution by simply comparing the selection probabilities of the agents in the case of dichotomous preferences, comparing the unscaled utilities of the agents could lead to extremely unbalanced solutions for cardinal preferences. Instead, we will compare the agents' utilities to the dystopia or the utopia point, as this reflects how much the utility of an agent changes compared to their worst or best utility in any of the optimal solutions. By replacing the disagreement point with the dystopia point, the column generation frameworks from Section \ref{subsec:leximin} (for linear objective functions) and \ref{subsec:custom_selection_criteria} (for minimizing convex objective functions) can be used to find lotteries over the set of optimal solutions representing several well-known solution concepts from cooperative bargaining.

In the spirit of the leximin rule (Section~\ref{subsec:leximin}), \cite{raiffa1953arbitration} and \cite{kalai1975other} studied the solution concept that maximizes the fraction of the maximum possible utility improvement of the worst-off agent, with respect to the disagreement point. When replacing the disagreement point by the dystopia point, this is equivalent to finding a distribution~$\bm{d}$ that maximizes the following objective function: 
\begin{equation}
\label{eq:card_maximin}
    \max\left\{\min_{i\in \M}\frac{d_i-o_i(\xi)}{u_i(\xi) - o_i(\xi)}: \bm{d} \in \text{Conv}(\Sol(\xi))\right\}.
\end{equation}
\cite{imai1983individual} extended the Raiffa-Kalai-Smorodinsky solution by lexicographically maximizing the vector containing the fractions of the maximum possible utility improvement that are experienced by the agents in the resulting distribution. Both solution concepts can be implemented using the column generation framework from Section~\ref{subsec:leximin} by modifying Constraints~(\ref{con:rmp_1}) accordingly, and by replacing the objective function~(\ref{eq:reduced_cost}) to be maximized in the pricing problem by
\begin{equation}
\label{eq:reduced_cost_cardinal}
-\sum_{i \in \mathcal{M}\setminus N_t}\frac{\mu^*_ix_i}{u_i(\xi)-o_i(\xi)} -\sum_{i \in N_t}\nu^*_ix_i - \rho^*.
\end{equation}

The Nash rule from Section~\ref{subsec:custom_selection_criteria} was originally introduced by \cite{nash1950bargaining} for the two-person bargaining game. In the case of cardinal preferences, it is the distribution that maximizes the product of the differences between an agent's utility in the solution and their utility in the dystopia point:
\begin{equation}
    \max\left\{\prod_{i\in \M}\left(d_i-o_i(\xi)\right): \bm{d} \in \text{Conv}(\Sol(\xi))\right\}.
\end{equation}

The RSD rule from Section~\ref{subsec:RSD} can also be extended by letting the agents sequentially retain the optimal solutions that maximize their utility difference between the final distribution and the dystopia point. While an iterative approach similar to Algorithm~\ref{alg:implement_RSD} will still work, the first implementation of RSD described in Section~\ref{subsec:RSD}, which relies on perturbing the objective function, is no longer applicable.

Many other solution concepts have been proposed in the literature on $n$-person cooperative bargaining that could be implemented in our setting using the column generation procedures from Sections \ref{subsec:leximin}-\ref{subsec:custom_selection_criteria}, and we believe it is an interesting research direction to identify attractive rules for our setting. We refer the reader to \cite{thomson2022axiomatic} for a survey on recent results.

Lastly, many cooperative bargaining solution concepts have been axiomatically characterized \citep{thomson1994cooperative,thomson2022axiomatic,peters2013axiomatic}. While our setting differs slightly because of the difficulty of identifying a disagreement point within the feasible region, replacing it by the dystopia point does not affect the validity for most of the discussed axioms. A detailed study of which axiomatic results can, and cannot, be judiciously transferred to fair integer programming lies outside the scope of our paper, however.

\section{Near-optimal solutions\label{sec:near_opt}}
Almost all of the solution methods that were described in Sections \ref{sec:partitioning}, \ref{sec:distributions}, and \ref{sec:extensions} can be extended in a straightforward way to find distributions over optimal and near-optimal solutions, regardless of whether the agents have dichotomous or cardinal preferences. Only the first implementation of RSD, which was based on perturbing the objective function, is no longer valid when including near-optimal solutions.

Let $\Sol^{\epsilon}(\xi)$ denote the set of solutions whose objective values are at most a fraction $\epsilon$ worse than the optimal objective value $z^*(\xi)$, i.e., $\bm{v^{\intercal}}\bm{x^j} + \bm{w^{\intercal} y^j} \geq (1-\epsilon)z^*(\xi)$ for all solutions $(\bm{x^j},\bm{y^j})\in \Sol^{\epsilon}(\xi)$. For the partitioning algorithm (Proposition \ref{prop:partitioning_YNM}) and for the iterative implementation of RSD (Algorithm~\ref{alg:implement_RSD}), finding a distribution over the optimal and near-optimal solutions simply implies checking whether a solution belongs to $\Sol^\epsilon(\xi)$ instead of to $\Sol(\xi)$. When imposing that a solution belongs to $\Sol^\epsilon(\xi)$ in the other solution methods, we can simply replace the constraint that the objective value is equal to $z^*(\xi)$ by $\bm{v^{\intercal}}\bm{x^j} + \bm{w^{\intercal} y^j} \geq (1-\epsilon)z^*(\xi)$. From an axiomatic point of view, the discussion in Section~\ref{sec:axioms} remains unchanged.

\section{Computational experiments}
\label{sec:comput}
In this section, we investigate the performance of the distribution rules that were discussed in Section~\ref{sec:distributions}. We study the proposed distribution rules for two distinct problems: kidney exchange, where the agents have dichotomous preferences, and the single-machine scheduling problem to minimize total tardiness, where the agents have cardinal preferences. We evaluate how the proposed distributions compare to the optimal Nash product and to the optimal minimum selection probability, and we compare the required computation times to obtain them. 

We compare the exact methods that were introduced in Section~\ref{sec:distributions} with the following two heuristics:
\begin{enumerate}[label = (\roman*)]
    %\item \textbf{RSD-heuristic (RSDh)}: given a random order of the agents $\bm{\sigma}$ and a solver precision $\omega$, we only perturb the objective function of the first $\left\lfloor -\log_2(\omega)\right\rfloor$ agents, as described in Section~\ref{subsec:RSD},
    \item \textbf{Re-index}: change the order in which the $\x$-variables are entered into the solver according to a random ordering of the agents $\bm{\sigma}$,
    \item \textbf{Perturb}: perturb the objective coefficients of each agent~$i$ with a small value $\gamma_i$ that is generated from the uniform distribution $[-\frac{1}{n}\sum_{i\in\A}v_i,$ $\frac{1}{n}\sum_{i\in\A}v_i]$. %This heuristic will only be evaluated for dichotomous preferences, because it is not guaranteed to find an optimal solution for cardinal preferences.

\end{enumerate}

%In short, we find that while the leximin and the Nash distributions perform best with respect to the evaluated welfare criteria, their computation times are also substantially higher than the required time to find a single optimal solution for an instance. Furthermore, the RSD distribution and the RSD-heuristic manage to perform reasonably well with respect to the evaluated welfare criteria in computation times that are similar to those of obtaining a single optimal solution.

\subsection{Computational setup}
\label{subsec:comp_setup}
Before describing our findings, we first discuss the details of the implementation, and the evaluated formulation for the kidney exchange problem and the problem of minimizing the total tardiness on a single machine.

\subsubsection{Implementation details}
\label{subsec:impl_details}
All experiments are implemented with \CC, compiled with Microsoft Visual Studio 2019, and run on an AMD Ryzen 7 PRO 3700U processor running at 2.30 GHz, with 32GB of RAM memory on a Windows 10 64-bit OS. All linear and integer linear programs are solved using Gurobi 10.0, with default parameter settings, and with a precision of~$10^{-5}$ to avoid numerical issues.

In the implementation of the algorithm to find a partitioning of the agents into sets $\Y$, $\M$, and $\N$ (Proposition \ref{prop:partitioning_YNM}), we add a callback to the solver that aborts the optimization as soon as the best upper bound on the objective function is smaller than the known optimal objective value, because we are only interested in knowing whether or not an optimal solution exists with the inclusion of an additional constraint in each step.

In the implementation of the column generation framework for the leximin rule (Section~\ref{subsec:leximin}), formulations $\left[\text{RMP}_t\right]$ and $\left[\text{LP}_{jt}\right]$ are modeled using a single model by changing the objective function, and by adding and removing constraint~(\ref{con:lp_1}) when required. Additionally, both frameworks are initiated with a subset $\tilde{\Sol}$ of the optimal solutions such that each of the agents in~$\M$ is selected in at least one of the solutions in~$\tilde{\Sol}$. Such a subset is found using a greedy algorithm, which iteratively adds a constraint to the original formulation to enforce the selection of at least one of the agents who is not yet selected by the solutions in $\tilde{\Sol}$, until the model becomes infeasible.

Lastly, for the distribution rules that adopt randomness (RSD, perturb and re-index heuristics) or that require all optimal solutions (uniform) we limit the number of iterations/found solutions to 1,000. As such, their reported performances are only approximated. 

The code to run these experiments, as well as the evaluated instances, are available online (\url{https://github.com/DemeulemeesterT/Fair-Integer-Programming.git}).

\subsubsection{Kidney exchange}
\label{subsec:Kidney}
The first application we consider is the kidney exchange problem, in which (incompatible) patient-donor pairs are matched with each other in such a way that the matched donors' kidneys can be successfully transplanted. The kidney exchange problem is known to be $\mathcal{NP}$-hard when the maximum allowed length of the exchange cycles is at least equal to three \citep{abraham2007clearing}. We implement the cycle formulation for this problem \citep{abraham2007clearing,roth2007efficient} because of its clear intuition. Note, however, that while more efficient formulations exist, the scope of this paper is not to find the most efficient formulation for a problem, but simply to assess the performance of the discussed distributions for a given formulation.

Let $V$ denote the set of all patient-donor pairs, and let $\mathcal{C}$ denote the set of all cycles of such pairs such that the donor of a pair in the cycle is compatible with the patient of the next pair in the cycle. Let $x_v$ be a binary decision variable which equals one if the patient from pair $v\in V$ receives a transplant. Moreover, let $y_c$ be a binary decision variable which equals one if cycle $c\in\mathcal{C}$ is selected for an exchange. Consider the following formulation $\left[\text{IP}_{\text{KE}}\right]$, which maximizes the number of executed transplants:
\begin{subequations}
\begin{align}
		& \left[\text{IP}_{\text{KE}}\right]& &\max &\sum_{v\in V} x_v&\\
		& & &\text{s.t.} \;\; &\sum_{c\in\mathcal{C}:v\in c}y_c&=x_v&\forall \;v\in V,\label{con:KE1}\\
        & & & &x_v &\in \{0,1\} &\forall\; v\in V,\label{con:KE2_1}\\
        & & & &y_c &\in \{0,1\} &\forall\; c\in\mathcal{C}.\label{con:KE2_2}
\end{align} 
\end{subequations}

We evaluate formulation $\left[\text{IP}_{\text{KE}}\right]$ on the kidney exchange instances that were used in \cite{farnadi2021individual}, in which the number of patient-donor pairs ranges from 10 to 70, with 50 instances for each size. The data are based on the US population characteristics presented in \cite{saidman2006increasing}, and were generated using the generator proposed by \cite{constantino2013new}. We only consider cycles of length at most three, following the observation by \cite{roth2007efficient} that cycles of size four or larger can often be decomposed into cycles of size at most three. Our methods can be extended to larger cycles in a straightforward way, as well as to formulations which allow for transplant chains initiated by altruistic donors.

\subsubsection{Minimize total tardiness on a single machine}
\label{subsubsec:tardiness}
The second application for which we evaluate the proposed rules is the classic problem of scheduling jobs on a single machine in order to minimize the total tardiness. \cite{du1990minimizing} showed that this problem is weakly $\mathcal{NP}$-hard, with a pseudo-polynomial algorithm provided by \cite{lawler1977pseudopolynomial}.

Let $N = \{1,\ldots,n\}$ denote a set of jobs, where each job $j\in N$ has processing time~$p_j\in\mathbb{N}$, and due date~$d_j\in\mathbb{N}$. Denoting the completion time of job~$j$ by~$C_j$, the \emph{tardiness}~$T_j$ of that job is defined as $T_j = \max\{C_j-d_j,0\}$. We assume that the corresponding agent of job~$j$ receives a utility~$u_j$ that is equal to the negative tardiness, i.e., $u_j = -T_j$. This implies that agents are indifferent about how much before the due date their job is completed, but their utility decreases linearly with each additional time period that their job is scheduled beyond the due date.

Among the many existing formulations for this problem, we implement the formulation with time index variables, as described by \cite{keha2009mixed}, because they find that it performs best for the related problem of minimizing the total \textit{weighted} tardiness on a single machine. Let $T = \sum_j p_j$ denote the latest possible completion time of any job. We define a binary decision variable $y_{jt}$ which is equal to one if job $j\in N$ is scheduled at time $t=0,\ldots,T-1$, and to zero otherwise. Moreover, let decision variable $x_j\leq 0$ denote the utility which the agent corresponding to job $j\in N$ receives from the solution. Consider the following formulation $\left[\text{IP}_{\text{TT}}\right]$ to find a schedule with minimal total tardiness, which is equivalent to maximal utilitarian utility: \newpage
\begin{subequations}
\begin{align}
		& \left[\text{IP}_{\text{TT}}\right]& &\max &\sum_{j\in N} x_j&\\
		& & &\text{s.t.}  \;\; &\sum_{t=0}^{T-1} y_{jt} &= 1&\forall\;j\in N,\label{con:TT1}\\
        & & & &\sum_{j=1}^n \sum_{s=\max\{0,t-p_j+1\}}^t y_{js} &\leq 1 &\forall \; t=0,\ldots,T-1,\label{con:TT2}\\
        & & & &\sum_{t=0}^{T-1}\left(t\cdot y_{jt}\right) + p_j &\leq d_j - x_j &\forall \; j \in N, \label{con:TT3}\\
        & & & &x_j \leq 0, \; y_{jt} &\in \{0,1\} &\forall\; j\in N \notag \\
        \label{con:TT5} & & & & & &t = 0,\ldots,T-1.
        %& & & &x_j &\leq 0 &\forall\; j\in N\label{con:TT4}\\
        %& & & &y_{jt} &\in \{0,1\} &\forall\; j\in N,\notag\\ \label{con:TT5}
        %& & & & & &t = 0,\ldots,T-1.
\end{align} 
\end{subequations}
Constraints~(\ref{con:TT1}) impose that each job should be scheduled exactly once, whereas Constraints~(\ref{con:TT2}) impose that only one job can be processed at each moment in time. Lastly, Constraints~(\ref{con:TT3}) describe the utilities of the agents.

We generate problem instances as described in \cite{chu1992branch} and \cite{baptiste2004branch}. Processing times~$p_j$ are uniformly sampled between 1 and 10. The due dates~$d_j$ are uniformly sampled from the interval $[0,\beta\sum_jp_j]$, where $\beta$ is a parameter. The number of jobs~$n$ in the generated instances ranges from 10 to 25, with step size five, and we evaluate three values of~$\beta$. For each combination of~$n$ and~$\beta$, we generate 50 instances.

\subsection{Results}
\label{subsec:results}
Figure~\ref{fig:MinNashKE} illustrates how the different distributions perform with respect to the minimum selection probability and the Nash product of the agents in $\M$, compared to the optimum. For each value of $|\M|$, the results in Figure~\ref{fig:MinNashKE} are averaged over all instances with that number of agents in~$\M$. In general, we can conclude that most distributions perform worse on the criterion that they do not optimize as instances grow larger.

For both problems, the performance of the leximin and the Nash distributions are close to the optimum on the criterion that they do not optimize. The performance of the other distribution rules, however, differs for the two studied applications.

\begin{figure}[t]
\includegraphics[width=\linewidth]{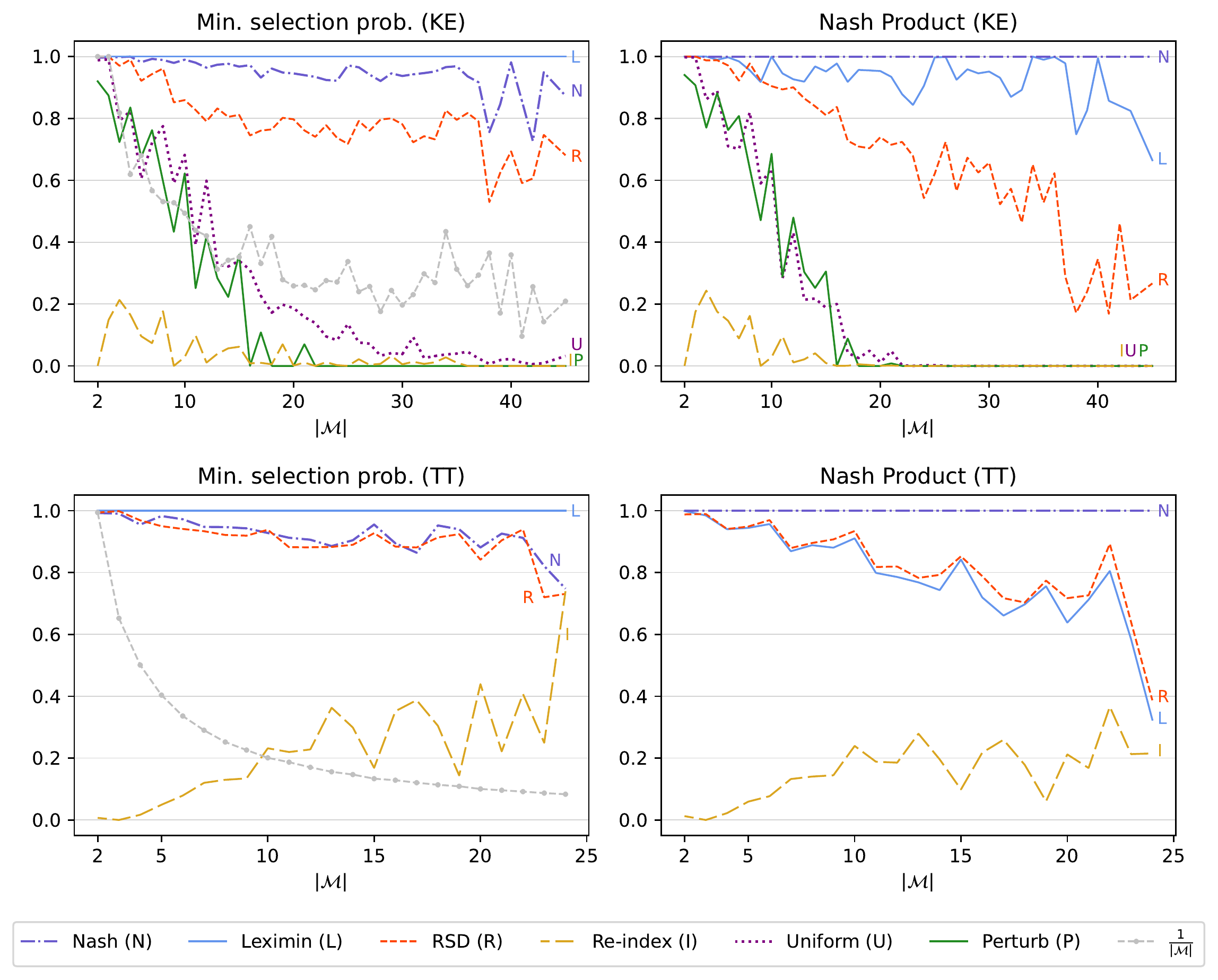}
\caption{Average ratio of minimum selection probability (left) and Nash product (right) for agents in $\M$ compared to the optimum in kidney exchange (KE - above) and total tardiness (TT - below) instances, with respect to the number of agents in $\M$.}
\label{fig:MinNashKE}
\end{figure}

For kidney exchange, the RSD distribution obtains the third-best performance on all criteria. Moreover, the uniform distribution, as well as the perturb and the re-index heuristics, all have decreasing ratios that are close to or equal to zero for larger instances. The minimum selection probability by the uniform distribution, for example, was less than 5\% of the optimum in 30 out of the 50 instances with 70 patient-donor pairs, and for the perturb and re-index heuristics, the minimum was less than 5\% in almost all instances of size 70. In comparison, the minimum selection probability by RSD is at least 40\% of the optimum in all kidney exchange instances.

\begin{table*}[t]
\caption{CPU time (in s) to find an optimal solution ($t_{\text{opt}}$), to find a partitioning of the agents into $\Y$, $\M$, and $\N$, and to compute the different distributions for the kidney exchange (KE) and the total tardiness (TT) instances. $|\M|$ denotes the average number of agents in $\M$. An asterisk (*) indicates that the distribution requires a prior partitioning of the agents, of which the time is not included.\label{tab:timeKE}}
\centering
\footnotesize
\begin{tabular}{crrrrrrr}%{p{0.2\linewidth}p{0.32\linewidth}p{0.15\linewidth}|p{0.05\linewidth}p{0.05\linewidth}p{0.15\linewidth}}
\toprule
\textbf{inst} & $|\M|$ & $\mathbf{t}_{\text{opt}}$ & \textbf{Partition} &\textbf{RSD*} & \textbf{Leximin*}&\textbf{Nash*}&\textbf{Uniform}\\
\midrule
KE10 & 1.5 & 0.002 & 0.009 & 0.004  & 0.008 & 0.125  & 0.006  \\
KE20 & 5.5 & 0.005 & 0.064 & 0.005  & 0.043 & 0.166  & 2.046   \\
KE30 & 9.3 & 0.009 & 0.158 & 0.008  & 0.124 & 0.267  & 16.069  \\
KE40 & 14.0 & 0.014 & 0.305 & 0.014  & 0.329 & 0.590  & 48.750  \\
KE50 & 17.4 & 0.020 & 0.469 & 0.020  & 0.577 & 0.967  & 77.588  \\
KE60 & 27.5 & 0.026 & 0.714 & 0.033  & 1.337 & 3.301  & 83.760  \\
KE70 & 29.5 & 0.029 & 1.046 & 0.054  & 1.699 & 14.425 & 115.546 \\
\midrule
TT10-0.05 & 5.6 & 0.033 & 0.596 & 0.129 & 0.221 & 0.196 & \ensuremath{-} \\
TT10-0.25 & 4.3 & 0.045 & 0.715 & 0.076 & 0.262 & 0.244 & \ensuremath{-} \\
TT10-0.50 & 3.4 & 0.058 & 0.878 & 0.024 & 0.268 & 0.247 & \ensuremath{-} \\
TT15-0.05 & 10.4 & 0.061 & 1.764 & 0.484 & 0.771 & 2.482 & \ensuremath{-} \\
TT15-0.25 & 7.9 & 0.115 & 2.720 & 0.438 & 1.371 & 1.112 & \ensuremath{-} \\
TT15-0.50 & 4.9 & 0.161 & 3.659 & 0.250 & 1.175 & 0.551 & \ensuremath{-} \\
TT20-0.05 & 15.1 & 0.103 & 3.975 & 1.158 & 1.996 & 70.927 & \ensuremath{-} \\
TT20-0.25 & 11.8 & 0.377 & 10.758 & 1.245 & 5.429 & 13.040 & \ensuremath{-} \\
TT20-0.50 & 8.0 & 0.544 & 23.589 & 1.592 & 6.769 & 4.267 & \ensuremath{-} \\
TT25-0.05 & 20.4 & 0.262 & 11.131 & 2.548 & 6.567 & 142.255 & \ensuremath{-} \\
TT25-0.25 & 15.8 & 0.626 & 22.527 & 2.494 & 11.585 & 24.382 & \ensuremath{-} \\
TT25-0.50 & 10.0 & 1.053 & 79.118 & 5.143 & 20.301 & 30.638 & \ensuremath{-}\\
\bottomrule
\end{tabular}
\end{table*}

For the problem of minimizing total tardiness, however, we observe that the RSD rule performs equally well as leximin and Nash on the criterion that they are not optimizing. Interestingly, we also observe that the ratios of the re-index heuristic are increasing for both criteria as instances grow larger, in contrast to the poor performance of the re-index heuristic for kidney exchange. For the instances with the most agents in $\M$, the ratios of the re-index heuristic are similar to those of the other rules that do not optimize that criterion. The good performance of the re-index heuristic might be caused by the fact that cardinal preferences allow for more \enquote{intermediate} integer solutions that lie in between the agents' dystopia and utopia points, in contrast to dichotomous preferences, but further research is required to determine whether this is a general or an application-specific observation.

For reference, the left panels of Figure~\ref{fig:MinNashKE} also show how the ratio of $\frac{1}{|\M|}$ performs compared to the discussed rules, reflecting a situation in which all agents in $\M$ have an equal share of the decision power. We showed in Section~\ref{sec:axioms} that the uniform rule violates the individual fair share property, and our computational experiments for kidney exchange show that this is not merely a theoretical result, but a common observation in practice.

Table~\ref{tab:timeKE} displays the required computational effort to find a partitioning of the agents, and to obtain each of the distributions. The computation time for finding an implementation of RSD for dichotomous preferences is, as expected, close to that of finding a single optimal solution, excluding the time to find a partitioning of the agents into $\Y$, $\M$, and $\N$. This is particularly true for kidney exchange, as we can apply the first implementation of RSD, which perturbs the objective function, whereas we apply the iterative implementation (Algorithm~\ref{alg:implement_RSD}) for minimizing total tardiness. Combining this observation with the performance of RSD in Figure \ref{fig:MinNashKE}, this presents RSD-variants as a pragmatic method to control the selection probabilities of the optimal solutions. Furthermore, we observe that, overall, the computation times for the leximin distribution scale better than for the Nash distribution for the evaluated instances, with the Nash rule having a particularly high variance in solution times (see Appendix~\ref{app:stdev}).

\section{Conclusion and future research directions}
\label{sec:conclusion}
Fair integer programming studies how to control the selection probabilities of the optimal solutions of integer linear programs with binary, integer or real decision variables. Our computational experiments show that rules such as the Random Serial Dictatorship (RSD) mechanism manage to combine a reasonably good performance on the evaluated welfare criteria with similar computation times to those of finding a single optimal solution. This result illustrates that addressing fair integer programming does not necessarily cause an increase in the computation times. Given the prevalence of integer programming formulations having multiple optimal solutions, we believe, therefore, that controlling the selection probabilities of the optimal solutions should be an essential step for decision-makers and practitioners when making high-impact decisions using integer programming techniques.

We identify three major directions for future research. First, our paper focuses on developing general-purpose algorithms that can be applied to a wide class of integer linear programs. The design of dedicated algorithms for specific problems that exploit the combinatorial structure of the problem at hand is an interesting research direction. Efficient algorithms to compute the maximin distribution have been proposed, for example, by \cite{li2014egalitarian} for kidney exchange, or by \cite{garcia2020fair} when the optimal solutions form a matroid. An interesting analysis of the combinatorial structure of fair distribution rules is the recent work by \cite{hojny2023fairness}. 

Second, we suggest investigating the existence of distributions other than RSD that can be implemented in similar computation times to those of finding a single optimal solution, possibly inspired by the wide range of solution concepts in cooperative bargaining, or by introducing fairness considerations into the literature on symmetry breaking in integer programming. 

Lastly, we have compared our column generation procedures to rules of which the underlying distributions could only be accurately computed by generating many optimal solutions (uniform, RSD, perturb, re-index). Therefore, we could only work with relatively small instances. A more extensive computational study of the column generation frameworks and of RSD implementations on larger instances and on other types of problems is a promising direction for future work.

\setlength{\parskip}{10pt}
\footnotesize\noindent \textbf{Acknowledgements} Tom Demeulemeester is funded by PhD fellowship 11J8721N of Research Foundation - Flanders. We would like to thank Markus Brill, Ágnes Cseh, Jannik Matuschke, and the anonymous reviewers for their valuable comments and suggestions.
\setlength{\parskip}{0pt}

\normalsize

%% If you have bibdatabase file and want bibtex to generate the
%% bibitems, please use
%%
\bibliographystyle{elsarticle-harv} 
\bibliography{bibl.bib}

%% else use the following coding to input the bibitems directly in the
%% TeX file.

%\begin{thebibliography}{00}

%% \bibitem[Author(year)]{label}
%% Text of bibliographic item

%\bibitem[ ()]{}

%\end{thebibliography}

%% The Appendices part is started with the command \appendix;
%% appendix sections are then done as normal sections
\appendix
\section{Proof of correctness column generation procedure Section~\texorpdfstring{\ref{subsec:custom_selection_criteria}}{4.3}}
\label{app:proof_of_correctness}
We adopt the same notation as in Section~\ref{subsec:custom_selection_criteria}. Before proving the correctness of the column generation procedure described in Section~\ref{subsec:custom_selection_criteria}, we point out that the following two conditions are satisfied. First, for any non-empty subset of the optimal solutions $\tilde{\Sol}$, convex program $[\mathcal{C}_f(\tilde{\Sol})]$ is clearly feasible. Second, for any convex function~$f:\mathbb{R}^n\to\mathbb{R}$ and for any non-empty $\tilde{\Sol}\subseteq\Sol$, convex program $[\mathcal{C}_f(\tilde{\Sol})]$ satisfies strong duality. This follows from the observation that $[\mathcal{C}_f(\tilde{\Sol})]$ satisfies Slater's condition \citep[e.g.,][Ch.\ 5.2.3]{boyd2004convex} for any non-empty subset $\tilde{\Sol}\subseteq\Sol$, because its constraints are linear. As a result, the Karush-Kuhn-Tucker (KKT) conditions for a solution $(\bm{\lambda},\bm{d^f})$ of $[\mathcal{C}_f(\tilde{\Sol})]$ to be optimal over all solutions in $\tilde{\Sol}$ are both necessary and sufficient.\footnote{Note that a variant of the described column generation would still work when additional, possibly convex, constraints are added to $[\mathcal{C}_f(\tilde{\Sol})]$, as long as the resulting convex program still satisfies strong duality \citep[e.g.,][]{flanigan2021fair}.}

Given a solution $(\bm{\lambda^*},\bm{d^*})$ of $[\mathcal{C}_f(\tilde{\Sol})]$ with dual variables $\bm{\mu^*}$, $\rho^*$ and $\bm{\nu^*}$, the KKT conditions imply that~$f$ is minimized over all optimal solutions in~$\tilde{\Sol}\subseteq\Sol$ in point $(\bm{\lambda^*},\bm{d^*})$ if and only if the following conditions are satisfied:

\begin{subequations}
\begin{align}
     \text{Constraints } &(\ref{con:general_1})-(\ref{con:general_3}),&\notag\\
     \nu^*_s &\geq 0 & \forall\;s=1,\ldots,|\tilde{\Sol}|,\label{con:KKT2}\\
     \nu^*_s\lambda^*_s&=0&\forall\;s=1,\ldots,|\tilde{\Sol}|,\label{con:KKT3}\\
     \frac{\partial}{\partial d_{i}} f(\bm{d^*}) &= \mu^*_i &\forall \;i\in\M,\label{con:KKT4}\\
     \sum_{i\in\M}\mu^*_ix^s_i + \rho^* &= \nu^*_s &\forall\;s=1,\ldots,|\tilde{\Sol}|.\label{con:KKT5}
\end{align}
\end{subequations}

First, we show that the proposed column generation procedure terminates after a finite number of iterations. To do so, it suffices to show that the new solution $(\bm{x'},\bm{y'})$ that is found in each iteration~$t$ was not yet in~$\tilde{S}^t$. Starting from Condition~(\ref{con:KKT5}), and by using the complementary slackness conditions~(\ref{con:KKT3}), the assumption that $\lambda^*_{old}>0$, and conditions~(\ref{con:KKT2}) that $\nu_s\geq0$ for all solutions in~$\tilde{\Sol}^t$, we obtain that
\begin{equation}
    \sum_{i\in\M}\mu^*_ix^{old}_i=-\rho^*=\sum_{i\in\M}\mu^*_ix^{s}_i-\nu_{s}^{*}\leq\sum_{i\in\M}\mu^*_ix^{s}_i,
\end{equation}
for all solutions $(\bm{x^{s}},\bm{y^{s}})\in\tilde{\Sol}^t$. When Condition~(\ref{eq:optimality_condition}) does not hold, this implies that $\sum_{i\in\M}\mu^*_ix'_i<\sum_{i\in\M}\mu^*_ix^{old}_i\leq \sum_{i\in\M}\mu^*_ix^{s}_i$, for all solutions $(\bm{x^{s}},\bm{y^{s}})\in\tilde{\Sol}^t$, which shows that $(\bm{x'},\bm{y'})\notin{\tilde{S}^t}$.

Second, we show that Condition~(\ref{eq:optimality_condition}) is indeed an optimality condition, and that a solution $(\bm{\lambda}^*,\bm{d}^*)$ of $[\mathcal{C}_f(\tilde{\Sol}^t)]$ minimizes~$f$ over all solutions in~$\Sol$ if Condition~(\ref{eq:optimality_condition}) holds. To show this, we will extend the variables~$\bm{\lambda^*}$ and~$\bm{\nu^*}$ to all solutions in $\Sol$. Let~$\lambda^*_s$ and $\nu^*_s$ remain unchanged for solutions $(\bm{x^s},\bm{y^s})\in\tilde{\Sol}^t$, and let $\lambda^*_r=0$ and $\nu^*_r = \sum_{i\in\M}\mu^*_ix^r_i + \rho^*$ for all solutions $(\bm{x^r},\bm{y^r})\in\Sol\setminus\tilde{\Sol}^t$. Next, we show that if Condition~(\ref{eq:optimality_condition}) holds, $\bm{d^*}$, $\bm{\mu^*}$, $\rho^*$, and these extended variables $\bm{\lambda^*}$, and $\bm{\nu^*}$ satisfy the KKT conditions for $[\mathcal{C}_f(\Sol)]$ over all solutions in~$\Sol$, and will therefore minimize~$f$ over~$\Sol$. 

Most of the conditions are directly implied by the fact that the KKT conditions hold for $[\mathcal{C}_f(\tilde{\Sol}^t)]$, because all variables in the equations remain the same (Equation~(\ref{con:KKT4}) remains unchanged, and Equations (\ref{con:general_3}), (\ref{con:KKT2}), (\ref{con:KKT3}), and (\ref{con:KKT5}) remain unchanged for all solutions $(\bm{x^s},\bm{y^s})\in\tilde{S}^t$). Clearly, the conditions imposed by Equations (\ref{con:general_1}) and (\ref{con:general_2}) are satisfied, because the newly introduced variables are equal to zero. Moreover, $\lambda_s = 0$ for all $(\bm{x^s},\bm{y^s})\in\Sol\setminus\tilde{\Sol}^t$, which satisfies the conditions imposed by Equations~(\ref{con:general_3}) and~(\ref{con:KKT3}). The conditions imposed by Equations~(\ref{con:KKT2}) are satisfied for all solutions $(\bm{x^s},\bm{y^s})\in\Sol\setminus\tilde{\Sol}^t$ by using our assumption that Condition~(\ref{eq:optimality_condition}) holds:
\begin{equation}
\label{eq:KKT_proof}
    \sum_{i\in\M}\mu^*_ix^s_i \geq \sum_{i\in\M}\mu^*_ix'_i \geq
    \sum_{i\in\M}\mu^*_ix^{old}_i \geq
    \sum_{i\in\M}\mu^*_ix^{old}_i - \nu^*_{old} = -\rho^*
\end{equation}
By definition of $\nu^*_s$ for all solutions $(\bm{x^s},\bm{y^s})\in\Sol\setminus\tilde{\Sol}^t$, Equation~(\ref{eq:KKT_proof}) implies that $\nu_s^*\geq 0$. Lastly, the definition of the new $\nu^*_s$ implies that conditions~(\ref{con:KKT5}) hold.

\section{Standard deviations of computation times}\label{app:stdev}

\begin{table}[H]
\caption{Standard deviations of the CPU time (in s) to find an optimal solution ($t_{\text{opt}}$), to find a partitioning of the agents into $\Y$, $\M$, and $\N$, and to compute the different distributions for the kidney exchange (KE) and the total tardiness (TT) instances. $|\M|$ denotes the average number of agents in $\M$. An asterisk (*) indicates that the distribution requires a prior partitioning of the agents, of which the time is not included.}\label{tab:std}
\centering
\footnotesize
\begin{tabular}{crrrrrrr}
\toprule
\textbf{inst} & $|\M|$ & $\mathbf{t}_{\text{opt}}$ & \textbf{Partition} &\textbf{RSD*} & \textbf{Leximin*}&\textbf{Nash*}&\textbf{Uniform}\\
\midrule
KE10 & 1.5 & 0.001 & 0.007 & 0.003 & 0.013 & 0.074 & 0.009 \\
KE20 & 5.5 & 0.005 & 0.055 & 0.004 & 0.061 & 0.148 & 4.966 \\
KE30 & 9.3 & 0.009 & 0.101 & 0.005 & 0.150 & 0.190 & 20.508 \\
KE40 & 14.0 & 0.014 & 0.145 & 0.007 & 0.341 & 0.536 & 36.124 \\
KE50 & 17.4 & 0.020 & 0.201 & 0.008 & 0.511 & 1.114 & 30.778 \\
KE60 & 27.5 & 0.026 & 0.289 & 0.016 & 1.305 & 4.255 & 22.324 \\
KE70 & 29.5 & 0.029 & 0.386 & 0.028 & 1.229 & 40.800 & 33.162 \\ \midrule
TT10-0.05 & 5.6 & 0.008 & 0.121 & 0.079 & 0.095 & 0.172 & \ensuremath{-} \\
TT10-0.25 & 4.3 & 0.027 & 0.312 & 0.099 & 0.264 & 0.218 & \ensuremath{-} \\
TT10-0.50 & 3.4 & 0.028 & 0.263 & 0.060 & 0.125 & 0.176 & \ensuremath{-} \\
TT15-0.05 & 10.4 & 0.039 & 0.450 & 0.129 & 0.311 & 6.523 & \ensuremath{-} \\
TT15-0.25 & 7.9 & 0.093 & 1.794 & 0.237 & 1.587 & 1.313 & \ensuremath{-} \\
TT15-0.50 & 4.9 & 0.112 & 3.945 & 0.301 & 1.109 & 0.492 & \ensuremath{-} \\
TT20-0.05 & 15.1 & 0.052 & 1.024 & 0.316 & 0.715 & 310.703 & \ensuremath{-} \\
TT20-0.25 & 11.8 & 0.364 & 10.806 & 0.616 & 6.884 & 22.678 & \ensuremath{-} \\
TT20-0.50 & 8.0 & 0.484 & 56.285 & 1.890 & 7.075 & 7.994 & \ensuremath{-} \\
TT25-0.05 & 20.4 & 0.208 & 4.554 & 0.576 & 3.593 & 325.369 & \ensuremath{-} \\
TT25-0.25 & 15.8 & 0.486 & 15.208 & 0.953 & 9.351 & 22.126 & \ensuremath{-} \\
TT25-0.50 & 10.0 & 0.928 & 176.583 & 12.269 & 40.300 & 52.123 & \ensuremath{-}\\ \bottomrule
\end{tabular}
\end{table}

\end{document}